\documentclass[12pt]{article}

\usepackage[a4paper, left=2.5cm, right=2.5cm, top=2.5cm, bottom=2.5cm]{geometry}
\usepackage{amssymb}
\usepackage{amsmath}
\usepackage{amsthm}
\usepackage{array}
\usepackage{booktabs}
\usepackage{multirow}
\usepackage{graphicx}
\usepackage{caption}
\usepackage{subcaption}
\usepackage{algorithm}
\usepackage{algpseudocode}
\usepackage{xcolor}
\usepackage{comment}
\usepackage{makecell}
\usepackage{float}
\usepackage{placeins}
\usepackage{tikz}
\usetikzlibrary{positioning,arrows.meta,shadows,fit}
\usepackage{csquotes}
\usepackage{enumitem}
\usepackage{xurl}
\usepackage[authoryear]{natbib}

\theoremstyle{plain}
\newtheorem{theorem}{Theorem}[section]
\newtheorem{lemma}{Lemma}[section]

\theoremstyle{definition}

\theoremstyle{remark}
\newtheorem{remark}{Remark}[section]

\DeclareMathOperator*{\argmin}{arg\,min}

\usepackage[bookmarks=false]{hyperref}

\hypersetup{
  pdftitle  = {QUBO-Based Calibration for Regression Trees},
  pdfauthor = {Iro René Kouarfate, Maxime Dion, Anne MacKay, Mathieu Pigeon}
}

\title{
QUBO-Based Calibration for Regression Trees\thanks{
Submitted to the editors DATE.

\medskip
\noindent
Funding: The authors acknowledge financial support from the Natural Sciences and Engineering Research Council of Canada [grant number 2024-05794; 2025-04224], the Pôle de recherche en finance quantique,
the Ministère de l’Économie, de l’Innovation et de l’Énergie du Québec, the Canada First Research Excellence Fund,
and the Chaire Co-operators en analyse des risques actuariels.
}
}

\author{
Iro René Kouarfate\thanks{
Department of Mathematics, Université de Sherbrooke,
Sherbrooke, Quebec, Canada
}\;;
Maxime Dion\thanks{
Institut quantique, Université de Sherbrooke,
Sherbrooke, Quebec, Canada
}\;; \\
Anne MacKay\footnotemark[2]\;;
and Mathieu Pigeon\thanks{
Department of Mathematics, Université du Québec à Montréal,
Montreal, Quebec, Canada
}
}

\date{}

\begin{document}

\maketitle

\begin{abstract}

Tree-based regression models are widely used in supervised learning, with the Classification and Regression Tree (CART) algorithm serving as a standard reference. CART construction involves solving a sequence of split-selection optimization problems, which are fractional and become combinatorial in nature for categorical predictors. Although, in the single-target regression setting with squared-error loss, this problem admits an efficient exact solution, as shown in \cite{fisher1958grouping, breiman1984cart}, we adopt a QUBO framework to address the categorical split-selection problem. This choice is motivated by the general-purpose nature of QUBO formulations, which provide a unified optimization framework that naturally extends to settings where classical splitting strategies rely on heuristics and may yield suboptimal solutions.

We propose a QUBO formulation of the categorical split-selection problem in single-target least-squares regression. The fractional nature of the objective function is handled using Dinkelbach’s algorithm \citep{dinkelbach1967algorithm} together with a class-based encoding. This leads to a compact sequence of QUBO problems whose size depends only on the number of categories, rather than on the sample size. Using state-of-the-art QUBO solvers, we construct QUBO-based regression trees with predictive performance comparable to standard CART, while yielding higher-quality categorical splits. Overall, this work highlights the relevance of QUBO formulations as a flexible optimization framework for tree-based learning and opens perspectives for future hybrid classical–quantum optimization approaches within CART extensions.

\bigskip

\noindent\textbf{Keywords:}
Combinatorial optimization; QUBO; Machine learning; CART; Categorical predictors.
    
\end{abstract}



\maketitle
\section{Introduction}
Tree-based methods are a central class of nonparametric statistical learning techniques widely used for regression and classification due to their interpretability, flexibility, and ability to handle both continuous and categorical predictors. Decision trees rely on recursive partitioning of the feature space, where splits are selected to maximize homogeneity within nodes according to criteria such as variance reduction or impurity measures \citep{breiman1984cart}. Among classical decision tree algorithms, including CHAID \citep{kass1980exploratory}, ID3 \citep{quinlan1986induction}, and C4.5 \citep{quinlan1993c45}, the CART framework remains a widely adopted reference, providing a systematic methodology for tree construction, pruning, and complexity control. 

The CART growing procedure relies on solving a sequence of split-selection optimization problems on a single explanatory variable at each node. While this task is tractable for continuous predictors, categorical variables induce a combinatorial optimization problem whose complexity grows exponentially with the number of categories. In the single-target regression setting with squared-error loss, however, the binary split-selection problem for a categorical predictor admits an efficient exact method. As shown in \citep{fisher1958grouping, breiman1984cart}, the optimal split can be obtained by ordering the categories according to their empirical mean response and evaluating the resulting contiguous partitions. This so-called “sort and scan” procedure reduces the computational complexity from exponential to linear in the number of categories while preserving optimality within this framework. Despite its efficiency, this method does not extend to the multivariate setting (\citep{segal1992tree, deathMultivariate2002}) $\mathbf{Y} \in \mathbb{R}^d$ $(d \geq 2)$, where no total order exists on $\mathbb{R}^d$ (\cite{larsen2004multivariate}). Heuristic strategies, such as sorting categories by a scalar projection of the within-category mean vector $\bar{\mathbf{Y}}_k$, are typically used but provide no optimality guarantee (\citep{coppersmith1999partitioning, wright2019splitting}).

In this work, we address this limitation by reformulating the categorical split-selection problem as a Quadratic Unconstrained Binary Optimization (QUBO) problem, enabling direct optimization over the full space of binary partitions in the single-target least-squares regression. While the present work does not investigate global optimization in more general settings, the proposed formulation naturally provides a general-purpose combinatorial optimization framework applicable to such extensions, including multi-target regression, complex partitioning schemes, and "look-ahead" strategies \citep{breiman1984cart}. As a flexible generalization of the classical CART splitting mechanism, the QUBO formulation also opens the perspectives to integrating hybrid classical–quantum optimization approaches within the CART framework and its extensions.

QUBO has emerged as a flexible modeling framework for combinatorial optimization problems involving discrete decision variables. In QUBO formulations, the objective function is expressed as a quadratic polynomial in binary variables, enabling the use of both classical optimization solvers and quantum-inspired hardware to compute optimal or near-optimal solutions \citep{glover2022quantum}. Interest in QUBO has grown significantly in recent years, driven by the development of efficient optimization solvers and its equivalence with the Ising model from statistical physics, which plays a central role in quantum computing \citep{lucas2014ising}. In particular, many quantum architectures, such as quantum annealers, are designed to minimize Ising Hamiltonians. By exploiting qubits and intrinsic quantum phenomena, quantum computing extends beyond classical bit-based computation. Therefore, expressing combinatorial optimization problems in QUBO form provides a natural and effective approach for exploiting quantum hardware. Such formulations are particularly well suited to partitioning and assignment problems. QUBO models have been successfully applied to various domains, including clustering \citep{matsumoto2022distance}, facility location \citep{aoki2021quboKMedoids}, Gaussian process variance reduction \citep{otsuka2020quboGaussian}, and financial portfolio optimization \citep{phillipson2021portfolio}. 

The split-selection criterion in regression trees arises from variance reduction and has a ratio-type structure, which places it within the class of fractional programming problems \citep{dinkelbach1967algorithm, schaible1976fractional}. Dinkelbach’s algorithm provides a well-established approach for handling such objectives through a sequence of parametric subproblems and has been successfully applied in various optimization settings \citep{you2009dinkelbach, zhong2014globally, ajagekar2020quantum}. 

In the context of decision trees, \citet{yawata2022qubo} propose a QUBO-based framework that enables multidimensional decision boundaries through logical-product split conditions, illustrating the potential of QUBO formulations for tree-based learning. Their approach unifies continuous and categorical predictors via feature binarization and discretizations, but modifies the classical variance-reduction objective by removing its fractional component, thereby altering its original statistical interpretation. Moreover, categorical variables are treated as collections of binary indicators, rather than preserving their natural partitioning structure. Consequently, QUBO formulations that preserve the exact fractional structure of the CART splitting criterion—naturally expressed as a ratio of quadratic forms—remain relatively unexplored. While multidimensional extension is interesting, it is not addressed in the present work. Instead, we focus on an exact QUBO reformulation of the categorical split-selection problem in regression CART that preserves the original variance criterion. 

For decision tree split selection, QUBO models require binary encoding for categorical variables. Observation-based encoding is commonly used, but it is not well suited to the node-wise CART splitting process and scales with the sample size, leading to large QUBO instances and high computational cost \citep{matsumoto2022distance}. To address these limitations, we propose a class‑based encoding strategy in which binary variables represent category assignments at the split level rather than at the observation level. 

We introduce a QUBO-based regression tree construction in which the fractional nature of the split-selection objective is handled using Dinkelbach’s algorithm together with a class-based encoding, resulting in a sequence of compact QUBO problems that can be solved using classical optimization solvers. The main contributions of this work are summarized below.
\begin{enumerate}[label=\Roman*.]
\setlength{\itemsep}{-0.25em}
\item \textbf{A compact class-based QUBO encoding for categorical variables}. We introduce an encoding strategy whose size scales with the number of categories rather than the number of observations, substantially reducing the dimensionality of the resulting QUBO compared to observation-based approaches.
\item \textbf{An exact QUBO formulation of the CART variance-reduction split-selection criterion for categorical predictors}. Our formulation provides an exact treatment of the ratio-type split-selection objective underlying CART and enables the computation of optimal or near-optimal solutions to the resulting combinatorial optimization problem using QUBO solvers.
\item \textbf{A theoretically grounded initialization scheme for the optimization parameter}. We propose an initialization strategy with theoretical justification that ensures practical convergence of the iterative algorithm based on Dinkelbach’s method.
\item \textbf{A QUBO-based regression tree implementation for prediction}.
We present a classical implementation of regression tree based on a QUBO formulation of the objective function for selecting splits of categorical variables, paving the way for hybrid classical-quantum CART algorithms.
\end{enumerate}

The remainder of the paper is organized as follows. Section~2 presents the proposed QUBO-based split formulation and theoretical results. Numerical experiments are reported in Section~3. Section~4 concludes and outlines directions for future research.


\section{QUBO-Based Regression Tree}

This section presents the theoretical foundations of our approach. We review regression trees and the QUBO framework, then reformulate the regression tree node-splitting cost as a pairwise interaction function based on response values. Since the empirical risk in~\eqref{eq:reg_tree_split} is not directly compatible with standard QUBO encodings, we propose an iterative approach with a dedicated encoding for categorical variables, leading to an explicit QUBO formulation for regression tree splits. The main implementation steps of the proposed algorithm are also outlined. Proofs are provided in Appendix~\ref{app:proofLemma} in the Online Supplement.

Throughout this section, we consider a continuous random response variable $Y$, a $d$-dimensional feature vector $\mathbf{X} = (X_1,\ldots,X_d)$ defined on a feature space $\mathcal{X}$, and an i.i.d. training sample $\mathcal{L} = \{(\mathbf{X}_i, Y_i)\}_{i=1}^N$. Let $\mathcal{P}$ denote a functional space of predictors $\pi : \mathcal{X} \to \mathbb{R}$. We denote by $\mathcal{N}_t$ a given parent node in a regression tree, with left and right child nodes $\mathcal{N}_{t_L}$ and $\mathcal{N}_{t_R}$, where $\mathcal{T} \supset \{t, t_L, t_R\}$ denotes the set of node indices. $\mathcal{N}_t$ contains $N_S$ observations, while $\mathcal{N}_{t_L}$ and $\mathcal{N}_{t_R}$ contain $N_L$ and $N_R$ observations, respectively, with $N_S = N_L + N_R$. Each node is associated with an empirical response variance, defined as the sample variance of the responses within the node and denoted by $\operatorname{Var}_S$, $\operatorname{Var}_L$, and $\operatorname{Var}_R$.

\subsection{Tree-based Regression}

In regression analysis, the objective is to approximate the relationship between $Y$ and $\mathbf{X}$ by a predictor 
$\pi \in \mathcal{P}$. Given a loss function $\phi$, the associated risk is defined as $\mathbb{E}[\phi(Y,\pi(\mathbf{X}))]$, and the optimal predictor satisfies
\begin{equation}
\pi_0 = \arg\min_{\pi \in \mathcal{P}} \mathbb{E}[\phi(Y,\pi(\mathbf{X}))].
\label{eq:pi0}
\end{equation}
When the squared error loss is used, the solution of \eqref{eq:pi0} is the regression function $\pi_0(\mathbf{X}) = \mathbb{E}(Y \mid \mathbf{X})$. 

Tree-based regression models are obtained by restricting the functional space $\mathcal{P}$ to piecewise-constant functions defined on a recursive partition of the feature space \citep{breiman1984cart}. Each region of the partition is associated with a constant prediction equal to the empirical mean of the observations it contains. Since the joint distribution of $(\mathbf{X},Y)$ is unknown, both the partition and the node predictions are learned from the training data through empirical risk minimization. At each internal node $\mathcal{N}_t$, tree construction requires selecting a binary split that locally minimizes the empirical squared error after partitioning the node into two child nodes $\mathcal{N}_{t_L}$ and $\mathcal{N}_{t_R}$. This leads to the node-splitting optimization problem
\begin{equation}
\argmin_{\mathcal{N}_{t_L},\, \mathcal{N}_{t_R}}
\left\{
\sum_{\mathbf{X}_i \in \mathcal{N}_{t_L}} (Y_i - \bar{Y}_{t_L})^2
+
\sum_{\mathbf{X}_i \in \mathcal{N}_{t_R}} (Y_i - \bar{Y}_{t_R})^2
\right\},
\label{eq:reg_tree_split}
\end{equation}
where $\bar{Y}_{t_L}$ and $\bar{Y}_{t_R}$ are the empirical means within the two child nodes. This criterion can equivalently be expressed as a weighted sum of the empirical variances. When predictor variables are categorical, the split-selection problem in \eqref{eq:reg_tree_split} becomes a discrete combinatorial optimization problem, since it amounts to assigning categories to either the left or the right child node. This combinatorial structure motivates the use of a Quadratic Unconstrained Binary Optimization framework.

\subsection{QUBO Framework}

QUBO provides a general framework for formulating combinatorial optimization problems defined over binary decision variables \citep{glover2022quantum,kochenberger2014ubiquitous}. In a QUBO model, both linear effects and pairwise interactions between binary variables are encoded in a quadratic objective function, allowing a wide range of discrete decision problems to be represented in a unified form. A key motivation for using QUBO lies in its dual compatibility with classical and quantum optimization paradigms. On the classical side, the matrix-based structure of QUBO formulations enables efficient implementations using mature optimization solvers such as \texttt{Gurobi} or \texttt{CPLEX}. On the quantum side, QUBO problems are mathematically 
equivalent to Ising models from statistical physics, which form the computational basis of quantum annealers and hybrid quantum–classical optimization platforms \citep{lucas2014ising,kadowaki1998quantum}.
From a modeling perspective, QUBO is particularly well suited to problems involving binary assignment decisions, such as the allocation of categorical levels to child nodes in regression tree splitting. Moreover, constrained binary optimization problems can be naturally reformulated within the QUBO framework through the use of quadratic 
penalty terms, enabling flexible modeling while preserving an unconstrained objective function. These properties are exploited to encode the categorical split-selection problem arising in 
tree-based regression as a QUBO model. The formal definition of QUBO problems, together with standard techniques for handling constraints via penalty methods, are recalled in Appendix~\ref{Appendix: QUBO} in the Online Supplement for completeness.

\subsection{Reformulation of the Node Splitting Cost Function}

The proposed approach integrates the fractional programming framework of \citet{dinkelbach1967algorithm} with a mean-free variance identity, which allows the variance to be expressed directly from the observations without involving the sample mean. This leads to a reformulation of the cost function solely in terms of pairwise differences.

\begin{lemma}[Variance Identity]
\label{lemma:variance_identity}
Let $\{Y_1, Y_2, \ldots, Y_N\}$ be a finite sample with mean $\bar{Y}$ and variance $\sigma^2$. Then, the empirical variance can be equivalently written as
\begin{equation}
\label{eq:variance_identity}
    \sigma^2 
    = \frac{1}{N} \sum_{i=1}^N (Y_i - \bar{Y})^2
    = \frac{1}{2N^2} 
    \sum_{i=1}^{N} \sum_{j=1}^{N} (Y_i - Y_j)^2.
\end{equation}
\end{lemma}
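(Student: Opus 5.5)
The plan is to prove the second equality in \eqref{eq:variance_identity} by expanding the double sum and reducing everything to the first two empirical moments. The first equality is simply the definition of the empirical variance $\sigma^2$, with the normalisation $1/N$ as stated, so nothing needs to be shown there.

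\textbf{Expand and separate.} Write $(Y_i - Y_j)^2 = Y_i^2 - 2Y_iY_j + Y_j^2$ and sum over all ordered pairs $(i,j) \in \{1,\ldots,N\}^2$. The sums of $Y_i^2$ and of $Y_j^2$ each contribute $N\sum_{k=1}^N Y_k^2$, since each term is repeated once for every value of the free index. The cross term factorises as
\[
-2\Bigl(\sum_{i=1}^N Y_i\Bigr)\Bigl(\sum_{j=1}^N Y_j\Bigr) = -2N^2\bar{Y}^2 .
\]
Combining these gives
\[
\sum_{i=1}^N\sum_{j=1}^N (Y_i-Y_j)^2 = 2N\sum_{k=1}^N Y_k^2 - 2N^2\bar{Y}^2 .
\]

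\textbf{Match with the left side.} Expanding $(Y_i - \bar{Y})^2$ and using $\sum_i Y_i = N\bar{Y}$ gives the standard identity
\[
\sum_{i=1}^N (Y_i-\bar{Y})^2 = \sum_{k=1}^N Y_k^2 - N\bar{Y}^2 .
\]
The double sum is therefore exactly $2N\sum_i (Y_i-\bar{Y})^2$. Dividing by $2N^2$ yields $\frac{1}{N}\sum_i (Y_i-\bar{Y})^2 = \sigma^2$, which is the claim.

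There is no real obstacle here; the only care point is bookkeeping of the constants. The double sum runs over ordered pairs and includes the zero diagonal terms $i=j$, which is why the factor is $1/(2N^2)$ rather than a normalisation over distinct unordered pairs. An alternative route, useful as a check, is to insert $\bar{Y}$ by writing $Y_i - Y_j = (Y_i-\bar{Y}) - (Y_j-\bar{Y})$. On expansion the cross term $\sum_i\sum_j (Y_i-\bar{Y})(Y_j-\bar{Y})$ vanishes because the centred deviations sum to zero, and the two remaining squared terms each give $N\sum_i (Y_i-\bar{Y})^2$.
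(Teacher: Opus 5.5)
Your proof is correct and follows essentially the same route as the paper: expand the double sum to $2N\sum_{i} Y_i^2 - 2N^2\bar{Y}^2$, then divide by $2N^2$ to get $\frac{1}{N}\sum_{i} Y_i^2 - \bar{Y}^2 = \sigma^2$. The only addition is that you write out the identity $\sum_{i}(Y_i-\bar{Y})^2 = \sum_{i} Y_i^2 - N\bar{Y}^2$, which the paper uses without stating it.
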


This lemma states a classical identity for the empirical variance. We include it for completeness, and a proof is provided in Appendix~\ref{app:proofLemma} in the Online Supplement.

By Lemma~\ref{lemma:variance_identity}, the original node-splitting 
optimization problem in \eqref{eq:reg_tree_split} can be reformulated as a fractional programming problem, that is


\begin{align}
\label{eq:proposed_fractional}
    \argmin _{\mathcal{N}_{t_L},\, \mathcal{N}_{t_R}} 
    \left\{ 
    \frac{\sum\limits_{i,j \in \mathcal{N}_{t_L}} (Y_i - Y_j)^2}{2N_L}
    + 
    \frac{\sum\limits_{i,j \in \mathcal{N}_{t_R}} (Y_i - Y_j)^2}{2N_R}
    \right\}.
\end{align} 


\subsection{Iterative Approach}

To handle the fractional structure of problem \eqref{eq:proposed_fractional}, 
we employ Dinkelbach's algorithm \citep{dinkelbach1967algorithm}, which transforms a fractional programming problem into an equivalent parametric formulation. The following definition and lemma, established in \citet{dinkelbach1967algorithm} and extended to mixed-integer settings in \citet{zhong2014globally}, provide the theoretical 
foundation of this approach.

\begin{lemma}[Parametric reformulation
\citep{dinkelbach1967algorithm}]
\label{lemma:dinkelbach}
Let $S$ be a nonempty compact feasible set, and $\mathbf{n}, \mathbf{d} : S \to \mathbb{R}$ be continuous with $\mathbf{d}(x) > 0$ for all $x \in S$. The fractional program
\begin{equation*}
    (P): \quad  \min_{x \in S}\; \{\mathbf{n}(x)/ \mathbf{d}(x)\},
\end{equation*}
admits for $\lambda \geq 0$ the equivalent parametric formulation
\begin{equation*}
    (P_\lambda): \quad F(\lambda) := \min_{x \in S}  F(\lambda, x),\; \text{with}\;\;  F(\lambda, x) = 
    \mathbf{n}(x) - \lambda\, \mathbf{d}(x).
\end{equation*}
\end{lemma}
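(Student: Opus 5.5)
We read ``equivalent'' in the usual Dinkelbach sense. Set $\lambda^\ast := \min_{x\in S} \mathbf{n}(x)/\mathbf{d}(x)$. This minimum exists because $S$ is compact and $\mathbf{n}/\mathbf{d}$ is continuous, since $\mathbf{d}>0$. The claim to prove is that $F(\lambda^\ast)=0$, and that $x^\ast$ solves $(P)$ if and only if $x^\ast$ solves $(P_{\lambda^\ast})$ with optimal value zero. We then add the standard characterization: $\lambda^\ast$ is the unique root of $F$, and $F$ is strictly decreasing and continuous. This makes the parametric family a faithful replacement for $(P)$, and is what justifies iterating over $\lambda$ later in the paper.

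First we establish the regularity of $F$. For each fixed $\lambda$, the map $x\mapsto \mathbf{n}(x)-\lambda\mathbf{d}(x)$ is continuous on the compact set $S$, so its minimum is attained and $F(\lambda)$ is well defined. As a pointwise minimum of affine functions of $\lambda$, $F$ is concave and hence continuous. For monotonicity, take $\lambda_1<\lambda_2$ and let $x_1$ attain $F(\lambda_1)$. Then
\[
F(\lambda_2)\le \mathbf{n}(x_1)-\lambda_2\mathbf{d}(x_1) < \mathbf{n}(x_1)-\lambda_1\mathbf{d}(x_1)=F(\lambda_1),
\]
where the strict inequality uses $\mathbf{d}(x_1)>0$. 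So $F$ is strictly decreasing, and it has at most one root.

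Next we show that $F(\lambda^\ast)=0$ and identify its minimizers. Since $\lambda^\ast \le \mathbf{n}(x)/\mathbf{d}(x)$ for every $x$, multiplying by $\mathbf{d}(x)>0$ gives $\mathbf{n}(x)-\lambda^\ast \mathbf{d}(x)\ge 0$ on all of $S$. At a minimizer $x^\ast$ of $(P)$ this quantity equals $0$, so $F(\lambda^\ast)=0$ and $x^\ast$ solves $(P_{\lambda^\ast})$. Conversely, suppose $x$ solves $(P_{\lambda})$ with $F(\lambda)=0$. Then $\mathbf{n}(y)-\lambda\mathbf{d}(y)\ge 0$ for all $y\in S$, with equality at $x$. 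Dividing by $\mathbf{d}(y)>0$ gives $\mathbf{n}(y)/\mathbf{d}(y)\ge\lambda=\mathbf{n}(x)/\mathbf{d}(x)$. Hence $x$ solves $(P)$ and $\lambda=\lambda^\ast$, which also matches the uniqueness of the root from the previous step.

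Finally, the restriction $\lambda\ge 0$ in the statement is harmless in our application. There $\mathbf{n}\ge 0$, being a sum of squared differences, and $\mathbf{d}>0$, so $\lambda^\ast\ge 0$ and the root lies in the stated range. No step here is a real obstacle. The only points needing care are these:
\begin{itemize}
\item invoking compactness so that the minima in both $(P)$ and $(P_\lambda)$ are attained;
\item stating the equivalence as ``same minimizers at $\lambda=\lambda^\ast$, with $\lambda^\ast$ the unique zero of $F$,'' not as equality of optimal values.
\end{itemize}
In the finite setting of category partitions, $S$ is a finite set, so compactness and continuity hold trivially.
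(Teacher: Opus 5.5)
Your proof is correct and is the classical Dinkelbach argument: $F$ is continuous and strictly decreasing, $F(\lambda^\ast)=0$, and the minimizers of $(P)$ and $(P_{\lambda^\ast})$ coincide. The paper gives no proof of its own and simply cites this argument from Dinkelbach (1967) and Schaible (1976); your reading of ``equivalent'' as agreement of minimizers at $\lambda=\lambda^\ast$, rather than for every $\lambda\ge 0$, is the right one, and it also proves the first part of Lemma~\ref{lemma:optim_condition}.
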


Based on Lemma~\ref{lemma:dinkelbach}, Lemma~\ref{lemma:optim_condition} 
characterizes the optimal solutions of $(P)$ and its parametric equivalent $(P_\lambda)$.

\begin{lemma}[Optimality Condition \citep{dinkelbach1967algorithm}]
\label{lemma:optim_condition}
Let $x^* \in S$ be a solution to $(P)$ with optimum value
$\lambda^* = \mathbf{n}(x^*)/\mathbf{d}(x^*)$.
Then,
\begin{equation*}
    \lambda^* = \min_{x \in S}\;\frac{\mathbf{n}(x)}{\mathbf{d}(x)}
    \iff
    F(\lambda^*) := \min_{x \in S}\, F(\lambda^*, x) = 0,
\end{equation*}
in which case $x^*$ also solves $(P_{\lambda^*})$, i.e.,
$x^* \in \arg\min_{x \in S}\, F(\lambda^*, x)$.
Furthermore, starting from any $\lambda_0 \geq 0$, the Dinkelbach 
update rule $\lambda_{k+1} = \frac{\mathbf{n}(x^*_k)}{\mathbf{d}(x^*_k)}$, where $x^*_k \in \arg\min_{x \in S}\; F(\lambda_k, x)$,
converges superlinearly to the optimal pair $(\lambda^*, x^*)$.
\end{lemma}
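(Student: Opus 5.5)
We prove the two parts of Lemma~\ref{lemma:optim_condition} in turn: first the equivalence, using only the sign structure of $F(\lambda,x)=\mathbf{n}(x)-\lambda\mathbf{d}(x)$ and the positivity of $\mathbf{d}$, and then the convergence of the update rule by viewing $F(\lambda)$ as a concave, strictly decreasing function of $\lambda$. Throughout, compactness of $S$ and continuity of $\mathbf{n},\mathbf{d}$ guarantee that every minimum written below is attained. In our application $S$ is finite, so this is automatic. Continuity also gives $0<d_{\min}\le \mathbf{d}(x)\le d_{\max}<\infty$ on $S$.

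\textbf{Equivalence.} Suppose $\lambda^*=\min_{x\in S}\mathbf{n}(x)/\mathbf{d}(x)$. Then $\mathbf{n}(x)\ge\lambda^*\mathbf{d}(x)$ for every $x\in S$, because $\mathbf{d}(x)>0$, so $F(\lambda^*,x)\ge 0$ on $S$. Moreover $F(\lambda^*,x^*)=0$, hence $F(\lambda^*)=0$ and $x^*$ attains the minimum of $(P_{\lambda^*})$. Conversely, suppose $F(\lambda^*)=0$ with $\lambda^*=\mathbf{n}(x^*)/\mathbf{d}(x^*)$. Then $\mathbf{n}(x)-\lambda^*\mathbf{d}(x)\ge 0$ for all $x$. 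Dividing by $\mathbf{d}(x)>0$ gives $\mathbf{n}(x)/\mathbf{d}(x)\ge\lambda^*$, so $\lambda^*$ is the minimal ratio.

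\textbf{Properties of $F$.} As a pointwise minimum of affine functions of $\lambda$, $F$ is concave and continuous. For $\lambda<\mu$ and $x_\mu$ a minimizer at $\mu$,
\[
F(\lambda)-F(\mu)\ \ge\ (\mu-\lambda)\,\mathbf{d}(x_\mu)\ \ge\ (\mu-\lambda)\,d_{\min}>0,
\]
so $F$ is strictly decreasing. Hence $\lambda^*$ is its unique root, and $F>0$ to the left of $\lambda^*$ and $F<0$ to the right. For any $\lambda$ with minimizer $x_\lambda$, the function $-\mathbf{d}(x_\lambda)$ is a supergradient of $F$ at $\lambda$. The Dinkelbach update $\lambda_{k+1}=\mathbf{n}(x_k)/\mathbf{d}(x_k)$ is therefore exactly a Newton step $\lambda_{k+1}=\lambda_k-F(\lambda_k)/(-\mathbf{d}(x_k))$ on this concave decreasing function.

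\textbf{Convergence.} From any $\lambda_0$, the iterate $\lambda_1$ is a feasible ratio, so $\lambda_1\ge\lambda^*$ and $F(\lambda_1)\le 0$. For $k\ge1$,
\[
F(\lambda_k)=\mathbf{d}(x_k)\,(\lambda_{k+1}-\lambda_k)\le 0,
\]
so the sequence is nonincreasing and bounded below by $\lambda^*$; it stops exactly when $F(\lambda_k)=0$, that is, at $\lambda^*$. If $S$ is finite, only finitely many distinct values can occur, so the iteration terminates in finitely many steps. In general, the limit $\bar\lambda$ satisfies $F(\bar\lambda)=0$ by continuity together with $\mathbf{d}(x_k)\ge d_{\min}$, hence $\bar\lambda=\lambda^*$.

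\textbf{Superlinear rate.} Write $\lambda_{k+1}-\lambda^*=\bigl(1-\mathbf{d}(x^*)/\mathbf{d}(x_k)\bigr)(\lambda_k-\lambda^*)$. This identity follows from $\mathbf{n}(x^*)-\lambda^*\mathbf{d}(x^*)=0$ and the optimality of $x_k$ at $\lambda_k$, and requires $\mathbf{d}(x_k)\to\mathbf{d}(x^*)$. This step is the main obstacle. If the minimizer of $(P_{\lambda^*})$ is unique, I would obtain it from upper semicontinuity of the argmin (Berge's maximum theorem) on the compact set $S$. If there are several minimizers, I would take $x^*$ to be one with maximal $\mathbf{d}$ among them, following Dinkelbach's original argument, and take limits along subsequences. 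The convergence factor then tends to zero, which gives superlinear convergence.
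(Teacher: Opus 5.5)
The paper gives no proof of this lemma; it defers to Dinkelbach (1967) and Schaible (1976). Your outline follows that classical route: the sign argument for the equivalence, concavity and strict decrease of $F(\lambda)$, the update read as a Newton step with supergradient $-\mathbf{d}(x_k)$, monotone convergence from $\lambda_1$ on, and a contraction factor for the rate. The equivalence and the finite-$S$ termination argument are correct.

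The rate step, however, rests on a false claim. The relation $\lambda_{k+1}-\lambda^*=\bigl(1-\mathbf{d}(x^*)/\mathbf{d}(x_k)\bigr)(\lambda_k-\lambda^*)$ is not an identity. Optimality of $x_k$ at $\lambda_k$ gives only
\[
\mathbf{n}(x_k)-\lambda_k\mathbf{d}(x_k)\le \mathbf{n}(x^*)-\lambda_k\mathbf{d}(x^*)=(\lambda^*-\lambda_k)\,\mathbf{d}(x^*).
\]
Dividing by $\mathbf{d}(x_k)$ yields
\[
0\le\lambda_{k+1}-\lambda^*\le\Bigl(1-\frac{\mathbf{d}(x^*)}{\mathbf{d}(x_k)}\Bigr)(\lambda_k-\lambda^*),
\]
with equality only when $x^*$ also minimizes $F(\lambda_k,\cdot)$. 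For a counterexample, take $S=\{a,b\}$ with $(\mathbf{n},\mathbf{d})(a)=(0,1)$, $(\mathbf{n},\mathbf{d})(b)=(1,10)$ and $\lambda_0=1$. Then $x_0=b$ and $\lambda_1-\lambda^*=0.1$, whereas your right-hand side is $0.9$.

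The inequality is all you need, and it removes your ``main obstacle''. You no longer need $\mathbf{d}(x_k)\to\mathbf{d}(x^*)$, only $\limsup_k\mathbf{d}(x_k)\le\mathbf{d}(x^*)$.
\begin{itemize}
\item Choose $x^*$ maximizing $\mathbf{d}$ over the optimal set of $(P)$. This set is a closed subset of $S$, hence compact, and it equals $\arg\min_x F(\lambda^*,x)$ because $F(\lambda^*)=0$.
\item If $x_{k_j}\to\bar x$, pass to the limit in $F(\lambda_{k_j},x_{k_j})\le F(\lambda_{k_j},y)$. This shows $\bar x\in\arg\min_x F(\lambda^*,x)$, hence $\mathbf{d}(\bar x)\le\mathbf{d}(x^*)$.
\item By compactness of $S$ this gives the $\limsup$ bound, so the factor has nonpositive $\limsup$.
\item Combined with the left inequality, in the non-terminating case where $\lambda_k>\lambda^*$ for all $k$, you get $(\lambda_{k+1}-\lambda^*)/(\lambda_k-\lambda^*)\to 0$.
\end{itemize}
No uniqueness assumption, Berge's theorem or case split is needed.

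There are two smaller slips.
\begin{itemize}
\item Strict decrease must use the minimizer at the \emph{smaller} parameter: $F(\mu)\le\mathbf{n}(x_\lambda)-\mu\mathbf{d}(x_\lambda)=F(\lambda)-(\mu-\lambda)\mathbf{d}(x_\lambda)$. With $x_\mu$ your inequality can fail; in the example above, $\lambda=0$, $\mu=1$ gives $F(\lambda)-F(\mu)=9<10=(\mu-\lambda)\mathbf{d}(x_\mu)$.
\item Concluding $F(\bar\lambda)=0$ needs the upper bound $|F(\lambda_k)|\le d_{\max}|\lambda_{k+1}-\lambda_k|$, not $d_{\min}$.
\end{itemize}

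Finally, this argument delivers $\lambda_k\to\lambda^*$ with every limit point of $(x_k)$ optimal. Convergence of $x_k$ to a prescribed $x^*$, as the statement's wording suggests, is not available in general when $(P)$ has several optimal solutions. For the finite $S$ used in the paper, finite termination makes this moot.
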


Proofs are provided in \cite{dinkelbach1967algorithm} and \cite{schaible1976fractional}.

In other words, optimality is reached when $\lambda_k$ matches the optimal ratio $\lambda^* = \mathbf{n}(x^*)/\mathbf{d}(x^*)$, driving the residual $F(\lambda^*) = F(\lambda^*, x^*)$ to zero. Next, we show that each subproblem $(P_{\lambda_k})$ can be reformulated as a QUBO problem.

\subsection{Binarization of Categorical Explanatory Variable}

QUBO models are defined over binary decision variables. To express the node-splitting optimization problem as a QUBO model, we first define a binarization of categorical explanatory variables using \textit{class-based encoding} strategy. Let $X_j$ be a categorical predictor with $M$ distinct categories $C_\alpha$, $\alpha = 1, \dots, M$. Denoting by $\mathbf{1}_{A}$ the indicator function of event $A$, for each category or class, we introduce a binary variable $q_\alpha$ such that
\begin{equation}
\label{eq:q_alphabeta}
    q_\alpha = \mathbf{1}_{\{C_\alpha \in \mathcal{N}_{t_L}\}}=
    1 - \mathbf{1}_{\{C_\alpha \in \mathcal{N}_{t_R}\}}.
\end{equation}

A categorical split is therefore fully characterized by the binary vector $\mathbf{q} = (q_{1},\dots,q_{M})^\top$. This representation encodes $2^M - 2$ admissible partitions of the category set, excluding the trivial vectors $\mathbf{q} = \mathbf{0}$ and $\mathbf{q} = \mathbf{1}$. These vectors assign all categories to a single child node, leaving the other empty, and therefore correspond to the absence of any effective split, that is, the trivial split
$\{\mathcal{N}_t,\emptyset\}$ rather than a valid partition. This binary encoding can be directly embedded into an objective function.

\subsection{Explicit QUBO Formulation of Regression Tree Splits}

The QUBO model is defined as
\begin{equation*}
    \min_{\mathbf{q} \in \mathcal{Q}^*} \mathbf{q}^\top Q \mathbf{q},
\end{equation*}
where $\mathcal{Q}^* = \{0,1\}^M \setminus \{\mathbf{0}, \mathbf{1}\}$ 
denotes the set of admissible binary vectors, i.e., all vectors excluding the trivial non-split solutions $\mathbf{q} = \mathbf{0}$ and $\mathbf{q} = \mathbf{1}$, and $Q$ encodes both the impurity measure and the structural constraints as quadratic penalty terms. The objective function for each potential split jointly captures the variance minimization objective and the logical constraints ensuring valid partitioning of categories. Further details are provided in Appendix~\ref{Appendix: QUBO} in the Online Supplement.

\subsubsection{Iterative QUBO Model}

From~\eqref{eq:proposed_fractional}, the node-splitting optimization problem reduces to
\begin{equation}
\label{eq:min_var}
    \argmin_{\mathcal{N}_{t_L},\, \mathcal{N}_{t_R}} 
    \left\{ N_L \operatorname{Var}_L + N_R \operatorname{Var}_R \right\},
\end{equation}
\begin{equation}
\label{eq:var_def}
\text{where}\; \operatorname{Var}_k = \frac{1}{2N_k ^2}\sum\limits_{i,j \in \mathcal{N}_{t_k}} (Y_i - Y_j)^2,\;  \text{for}\; k\in\{L,R\}.  
\end{equation}

Let $\mathcal{S}$ denote the set of admissible splits $\delta$ of node 
$\mathcal{N}_t$ into $\mathcal{N}_{t_L}$ and $\mathcal{N}_{t_R}$, and\\
$\mathcal{R}_\delta = N_L \operatorname{Var}_L + N_R \operatorname{Var}_R$ the splitting cost associated with split $\delta$. Each admissible split $\delta$ corresponds to a binary vector $\mathbf{q}$ of class assignment variables $q_\alpha$, so that $\mathcal{R}_\delta = \mathcal{R}(\mathbf{q})$ for some function \\ $\mathcal{R}: \mathcal{Q}^* \rightarrow \mathbb{R}$. Lemma~\ref{Lemma:binary-sum-var} and Lemma~\ref{lemma:fractional_decomposition} establish the explicit expression of $\mathbf{q} \mapsto \mathcal{R}(\mathbf{q})$, which constitutes the first step toward casting the problem into the QUBO framework. 

\begin{lemma}[Weighted node variances via binary variables]
\label{Lemma:binary-sum-var}
Let $q_{\alpha}$, $q_{\beta}$ be defined by (\ref{eq:q_alphabeta}) and let $ N_L \operatorname{Var}_L$ and $N_R \operatorname{Var}_R$ be the weighted variances of the left and right child nodes, with $\operatorname{Var}_L$, and $\operatorname{Var}_R$ defined in (\ref{eq:var_def}). Then, we have


\begin{align*}
    N_L \operatorname{Var}_L = \frac{1}{N_L} \sum_{\alpha,\beta} q_\alpha q_\beta 
    V_{\alpha\beta}, 
    \qquad
    N_R \operatorname{Var}_R = \frac{1}{N_R} \sum_{\alpha,\beta} 
    (1-q_\alpha)(1-q_\beta) V_{\alpha\beta},
\end{align*}


\begin{equation*}
 \text{where}\quad   V_{\alpha,\beta} = \frac{1}{2} \sum_{i \in C_\alpha} \sum_{j \in C_\beta} 
    (Y_i - Y_j)^2,
    \quad
    N_L = \sum_\alpha q_\alpha N_\alpha,
    \quad 
    N_R = \sum_\alpha (1-q_\alpha) N_\alpha,
\end{equation*}
and $N_\alpha$ denotes the number of observations belonging to category $C_\alpha$.
\end{lemma}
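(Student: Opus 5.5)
We start from the definition \eqref{eq:var_def}, which gives
\[
N_L \operatorname{Var}_L = \frac{1}{2N_L}\sum_{i,j \in \mathcal{N}_{t_L}} (Y_i - Y_j)^2 .
\]
So the task is to rewrite the double sum over observations in the left child as a double sum over categories weighted by the assignment variables $q_\alpha$.

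The key observation is that membership of an observation in $\mathcal{N}_{t_L}$ is determined entirely by its category. Since the categories $C_1,\dots,C_M$ partition $\mathcal{N}_t$, each observation $i$ belongs to exactly one $C_\alpha$. By \eqref{eq:q_alphabeta}, $\mathbf{1}_{\{i \in \mathcal{N}_{t_L}\}} = \sum_\alpha q_\alpha \mathbf{1}_{\{i\in C_\alpha\}}$.

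Hence I would write
\[
\sum_{i,j \in \mathcal{N}_{t_L}} (Y_i-Y_j)^2 = \sum_{i,j\in\mathcal{N}_t} \mathbf{1}_{\{i \in \mathcal{N}_{t_L}\}}\mathbf{1}_{\{j \in \mathcal{N}_{t_L}\}} (Y_i-Y_j)^2 .
\]
Substituting the indicator expansion and exchanging the finite sums yields
\[
\sum_{\alpha,\beta} q_\alpha q_\beta \sum_{i\in C_\alpha}\sum_{j\in C_\beta}(Y_i-Y_j)^2 = 2\sum_{\alpha,\beta} q_\alpha q_\beta V_{\alpha\beta}.
\]
Dividing by $2N_L$ gives the first identity. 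The same count gives $N_L = \sum_{i} \mathbf{1}_{\{i\in\mathcal{N}_{t_L}\}} = \sum_\alpha q_\alpha N_\alpha$.

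The right child is handled identically, using $\mathbf{1}_{\{i \in \mathcal{N}_{t_R}\}} = \sum_\alpha (1-q_\alpha)\mathbf{1}_{\{i\in C_\alpha\}}$. This also follows from \eqref{eq:q_alphabeta}, since each category goes to exactly one child. It produces $(1-q_\alpha)(1-q_\beta)$ in place of $q_\alpha q_\beta$ and gives $N_R=\sum_\alpha(1-q_\alpha)N_\alpha$.

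There is no real obstacle here. The only point needing care is justifying the indicator decomposition, which relies on the categories being disjoint and exhaustive within the node. I would also note that $\mathbf{q}\in\mathcal{Q}^*$ guarantees $N_L,N_R>0$, so the divisions are well defined.
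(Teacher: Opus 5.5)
Your proof is correct and follows essentially the same route as the paper: both rewrite the pairwise double sum from \eqref{eq:var_def} by grouping observations by category, encode which child each category goes to through $q_\alpha$ (resp.\ $1-q_\alpha$), and obtain $N_L, N_R$ by the same count. Your indicator decomposition $\mathbf{1}_{\{i\in\mathcal{N}_{t_L}\}}=\sum_\alpha q_\alpha\mathbf{1}_{\{i\in C_\alpha\}}$ just writes out explicitly the step the paper calls ``introducing binary variables.''
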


Building on Lemma~\ref{Lemma:binary-sum-var}, Lemma~\ref{lemma:fractional_decomposition} shows that 
$\mathcal{R}_\delta = \mathcal{R}(\mathbf{q})$ can be expressed as a fractional program in the sense of Lemma~\ref{lemma:dinkelbach}, thereby identifying the numerator $\mathbf{n(q)}$ and denominator $\mathbf{d(q)}$ required to apply Dinkelbach's parametric reformulation.

\begin{lemma}[Fractional decomposition of the cost function]
\label{lemma:fractional_decomposition}
The node-splitting cost function $\mathcal{R}_\delta = \mathcal{R}(\mathbf{q)}$ admits the fractional representation $ \mathcal{R}(\mathbf{q}) = \mathbf{n}(\mathbf{q})/\mathbf{d}(\mathbf{q})$ for $\mathbf{q} \in \mathcal{Q}^{*}$, where


\begin{align*}
\mathbf{d}(\mathbf{q}) &= N_L N_R 
    = \sum_\alpha N_S N_\alpha q_\alpha 
    - \sum_{\alpha,\beta} N_\alpha N_\beta q_\alpha q_\beta, \; \text{and}\\
    \mathbf{n}(\mathbf{q}) &= \sum_{\alpha,\beta} q_\alpha q_\beta 
    \left(N_S V_{\alpha,\beta} - N_\beta\sum_\gamma V_{\alpha,\gamma} 
    - N_\alpha\sum_\gamma V_{\gamma,\beta}\right) 
    + N_S ^2\operatorname{Var}_S \sum_\alpha N_\alpha q_\alpha.
\end{align*}
\end{lemma}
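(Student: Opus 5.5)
Starting from Lemma~\ref{Lemma:binary-sum-var}, I will write
\[
\mathcal{R}(\mathbf{q}) = \frac{A(\mathbf{q})}{N_L} + \frac{B(\mathbf{q})}{N_R},
\]
where $A(\mathbf{q})=\sum_{\alpha,\beta} q_\alpha q_\beta V_{\alpha\beta}$ and $B(\mathbf{q})=\sum_{\alpha,\beta}(1-q_\alpha)(1-q_\beta)V_{\alpha\beta}$. For $\mathbf{q}\in\mathcal{Q}^*$ both $N_L$ and $N_R$ are positive, so I can bring the two terms over a common denominator:
\[
\mathcal{R}(\mathbf{q}) = \frac{N_R A(\mathbf{q}) + N_L B(\mathbf{q})}{N_L N_R}.
\]
This identifies the candidate numerator $\mathbf{n}(\mathbf{q}) = N_R A + N_L B$ and the denominator $\mathbf{d}(\mathbf{q}) = N_L N_R$. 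The rest of the proof consists of writing both quantities as polynomials in $\mathbf{q}$.

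\textbf{The denominator.} Set $N_L=\sum_\alpha N_\alpha q_\alpha$ and $N_R = N_S - N_L$. Then
\[
N_L N_R = N_S\sum_\alpha N_\alpha q_\alpha - \Big(\sum_\alpha N_\alpha q_\alpha\Big)^2,
\]
and expanding the square gives exactly the stated $\mathbf{d}(\mathbf{q})$.

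\textbf{Expanding $B$.} Expand $(1-q_\alpha)(1-q_\beta) = 1 - q_\alpha - q_\beta + q_\alpha q_\beta$. This gives
\[
B = \sum_{\alpha,\beta} V_{\alpha\beta} - \sum_\alpha q_\alpha \sum_\gamma V_{\alpha\gamma} - \sum_\beta q_\beta \sum_\gamma V_{\gamma\beta} + A.
\]
Summing $V_{\alpha\beta}$ over all pairs of categories gives the full double sum over the parent node. By Lemma~\ref{lemma:variance_identity} (with $N_S$ in place of $N$),
\[
\sum_{\alpha,\beta} V_{\alpha\beta} = \tfrac12\sum_{i,j\in\mathcal{N}_t}(Y_i-Y_j)^2 = N_S^2\operatorname{Var}_S.
\]

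\textbf{Expanding the numerator.} Substituting into $\mathbf{n}$ gives
\[
\mathbf{n} = (N_S - N_L)A + N_L B = N_S A + N_L\big(B - A\big).
\]
The $A$ inside $B$ cancels against $-N_L A$. What remains is $N_S A$ plus $N_L$ times the constant and the linear terms of $B$. I then substitute $N_L = \sum_\beta N_\beta q_\beta$:
\begin{itemize}
\item $N_L \cdot N_S^2\operatorname{Var}_S$ gives the stated final linear term.
\item $-N_L\sum_\alpha q_\alpha\sum_\gamma V_{\alpha\gamma}$ becomes $-\sum_{\alpha,\beta} q_\alpha q_\beta N_\beta\sum_\gamma V_{\alpha\gamma}$.
\item $-N_L\sum_\beta q_\beta\sum_\gamma V_{\gamma\beta}$ becomes $-\sum_{\alpha,\beta} q_\alpha q_\beta N_\alpha \sum_\gamma V_{\gamma\beta}$, after relabelling the dummy indices.
\end{itemize}
Collecting these with $N_S A$ yields the claimed quadratic form.

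The only delicate step is this last one. The index relabelling in the two cross terms must be done so that they match the stated coefficients $N_\beta\sum_\gamma V_{\alpha,\gamma}$ and $N_\alpha\sum_\gamma V_{\gamma,\beta}$. I will also use the symmetry $V_{\alpha\beta}=V_{\beta\alpha}$ if needed. Finally, I will check that no $q_\alpha^2$ simplification is being used implicitly; none is needed, since the identity is purely algebraic.
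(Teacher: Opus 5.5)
Your proposal is correct and follows the paper's proof essentially step for step. You bring the two terms over the common denominator $N_LN_R$, rewrite the numerator as $N_S A + N_L(B-A)$ (this is the paper's $\sum_{\alpha,\beta}(1-q_\alpha-q_\beta)V_{\alpha,\beta}$ term), substitute $N_L=\sum_\gamma N_\gamma q_\gamma$, relabel indices, and identify $\sum_{\alpha,\beta}V_{\alpha,\beta}=N_S^2\operatorname{Var}_S$. You are slightly more explicit than the paper, since you derive the expansion of $\mathbf{d}(\mathbf{q})$ and justify the constant term via Lemma~\ref{lemma:variance_identity}; the paper uses both tacitly.
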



By Lemma~\ref{lemma:dinkelbach}, minimizing $\mathcal{R}(\mathbf{q})$ over $\mathcal{Q}^{*}$ is equivalent to solving 
$\min_{\mathbf{q} \in \mathcal{Q}^{*}} F(\lambda, \mathbf{q})$ 
for a given $\lambda \geq 0$. By Lemma~\ref{lemma:optim_condition}, 
the optimal pair $(\lambda^*, \mathbf{q}^*)$ is characterized by 
$F(\lambda^*, \mathbf{q}^*) = 0$. The parameter $\lambda$ is iteratively updated via  $\lambda_{n+1} = \mathbf{n}(\mathbf{q}^*_n)/\mathbf{d}(\mathbf{q}^*_n)$, where \\$\mathbf{q}^*_n \in \arg\min_{\mathbf{q} \in \mathcal{Q}^{*}}\; F(\lambda_n, \mathbf{q})$.

The function $F$ is defined over the full binary space $\{0,1\}^M$, which includes the trivial vectors ($\mathbf{q} \in \{\mathbf{0}, \mathbf{1}\}$) satisfying $\mathbf{d(q)} = 0$. Consequently, the constraint $\mathbf{d(q)} > 0$ required by Lemma~\ref{lemma:dinkelbach} is not automatically satisfied over $\{0,1\}^M$. Enforcing this constraint explicitly would require introducing slack variables, thereby increasing the number of binary variables in the QUBO formulation, as commonly done in the literature. To avoid this, we instead solve the minimization of $F$ over the full space $\{0,1\}^M$ without enforcing $\mathbf{d(q)} \neq 0$. This choice raises two issues: the definition of the update rule for $\lambda$ when $\mathbf{d(q)} = 0$, and the design of an appropriate convergence criterion that excludes trivial solutions. To address the first issue, we adopt the standard statistical convention that both the number of observations and the variance of an empty node are zero. Lemma~\ref{lemma:lambda_update} defines a valid update rule for $\lambda$ at each iteration.

\begin{lemma}[Update rule for the parameter $\lambda$]
\label{lemma:lambda_update}
At iteration $n \geq 1$, let\\ $\mathbf{q}^*_n \in \arg\min_{\mathbf{q} \in 
\mathcal{Q}^*} F(\lambda_n, \mathbf{q})$ be the optimal solution. Then,
\begin{enumerate}[label=\roman*)]
    \item If $\mathbf{d}(\mathbf{q}^*_n) > 0$, Lemma~\ref{lemma:optim_condition} yields
        $\lambda_{n+1}
        = \mathbf{n}(\mathbf{q}^*_n)/\mathbf{d}(\mathbf{q}^*_n)
        = N_L \operatorname{Var}_{L} + N_R \operatorname{Var}_{R}.$
    \item If $\mathbf{d}(\mathbf{q}^*_n) = 0$, we have $\mathbf{q}^*_n \in \{\mathbf{0}, \mathbf{1}\}$ and we set 
    $\lambda_{n+1} = N_S \operatorname{Var}_S$.
\end{enumerate}
\end{lemma}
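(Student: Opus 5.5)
The plan is to treat the two cases of the statement separately. Both rest on Lemma~\ref{lemma:fractional_decomposition} and on the convention that an empty node has zero size and zero variance. One caveat: although the statement writes $\arg\min$ over $\mathcal{Q}^*$, the surrounding text says the minimization is carried out over the full space $\{0,1\}^M$. Case ii) only makes sense under that reading, so I adopt it.

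\textbf{Case i).} Here $\mathbf{d}(\mathbf{q}^*_n)=N_LN_R>0$, so both children are nonempty and $\mathbf{q}^*_n\in\mathcal{Q}^*$. The Dinkelbach update of Lemma~\ref{lemma:optim_condition} is well defined and gives $\lambda_{n+1}=\mathbf{n}(\mathbf{q}^*_n)/\mathbf{d}(\mathbf{q}^*_n)$. By Lemma~\ref{lemma:fractional_decomposition}, this ratio equals $\mathcal{R}(\mathbf{q}^*_n)$, which by definition is $N_L\operatorname{Var}_L+N_R\operatorname{Var}_R$. Nothing more is needed for this case.

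\textbf{Case ii).} First I identify the zeros of $\mathbf{d}$. Since $\mathbf{d}(\mathbf{q})=N_LN_R$, the condition $\mathbf{d}(\mathbf{q})=0$ forces $N_L=0$ or $N_R=0$. Every category present in the node has $N_\alpha\ge 1$, so $N_L=\sum_\alpha q_\alpha N_\alpha=0$ forces $\mathbf{q}=\mathbf{0}$, and $N_R=0$ forces $\mathbf{q}=\mathbf{1}$. The converse inclusion is immediate.

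Next I justify the value $N_S\operatorname{Var}_S$. In this case $\mathbf{n}(\mathbf{q})/\mathbf{d}(\mathbf{q})$ is undefined, so I go back to the original cost $N_L\operatorname{Var}_L+N_R\operatorname{Var}_R$ and evaluate it with the empty-node convention. For $\mathbf{q}=\mathbf{1}$, we have $N_L=N_S$ and $\operatorname{Var}_L=\operatorname{Var}_S$, while the right-node term is zero. This gives $N_S\operatorname{Var}_S$, and $\mathbf{q}=\mathbf{0}$ is symmetric.

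As a consistency check, I can also read this from the formula for $\mathbf{n}$. At $\mathbf{q}=\mathbf{1}$, the term $\sum_\alpha N_\alpha q_\alpha$ equals $N_S$, and using $\sum_{\alpha,\beta}V_{\alpha\beta}=N_S^2\operatorname{Var}_S$ the quadratic part equals $-N_S^3\operatorname{Var}_S$. Hence $\mathbf{n}(\mathbf{1})=0$, and likewise $\mathbf{n}(\mathbf{0})=0$. So both $\mathbf{n}$ and $\mathbf{d}$ vanish at the trivial vectors, the ratio is genuinely $0/0$, and the convention is what fixes the value.

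\textbf{Main obstacle.} The only delicate point is interpretive: $N_S\operatorname{Var}_S$ is a definition backed by the convention, not a derived limit. To support it, I would note that it is the cost of the ``no split'' partition. It is also an upper bound on $\mathcal{R}(\mathbf{q})$ for every $\mathbf{q}\in\mathcal{Q}^*$, because a split never increases the within-node sum of squares (law of total variance). Consequently, at the next iteration $F(\lambda_{n+1},\mathbf{q})=\mathbf{n}-\lambda_{n+1}\mathbf{d}\le 0$ on every admissible $\mathbf{q}$, which is exactly what a valid Dinkelbach restart needs.
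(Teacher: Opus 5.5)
Your proposal is correct and follows essentially the same route as the paper. Case i) uses Lemma~\ref{lemma:optim_condition} together with Lemma~\ref{lemma:fractional_decomposition}, and case ii) uses $\mathbf{d}=N_LN_R=0$ plus the empty-node convention to obtain $N_S\operatorname{Var}_S$. Your additions are sound refinements of points the paper leaves implicit: invoking $N_\alpha\ge 1$ to force $\mathbf{q}\in\{\mathbf{0},\mathbf{1}\}$, verifying $\mathbf{n}(\mathbf{0})=\mathbf{n}(\mathbf{1})=0$, and reading the minimization over $\{0,1\}^M$ so that case ii) is not vacuous.
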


\begin{theorem}[Iterative QUBO formulation]
\label{theorem:iterative_qubo}
At iteration $n \geq 1$, the node-splitting optimization problem reduces to the QUBO problem


\begin{align*}
    \argmin_{\mathbf{q} \in \{0,1\}^M} F(\lambda_n, \mathbf{q}) &= 
    \argmin_{\mathbf{q} \in \{0,1\}^M}
    \sum_{\alpha, \beta} Q_{\alpha, \beta} \, q_\alpha q_\beta
    + \sum_{\alpha} L_\alpha \, q_\alpha
    = \argmin_{\mathbf{q} \in \{0,1\}^M} \mathbf{q}^\top H \mathbf{q},
\end{align*}


with coefficients


\begin{align*}
\Biggl\{
\begin{array}{ll}
    Q_{\alpha,\beta}
    &= N_S V_{\alpha,\beta}
       - N_\beta \sum_{\gamma} V_{\alpha,\gamma}
       - N_\alpha \sum_{\gamma} V_{\gamma,\beta}
       + \lambda_n N_\alpha N_\beta, \\[0.6em]
    L_\alpha
    &= \left( N_S ^2 \operatorname{Var}_S - N_S \lambda_n \right) N_\alpha,
\end{array}
\Biggr.
\end{align*}


\noindent defining the entries of the symmetric QUBO matrix $H$, and where $\lambda_n$ is updated according to Lemma~\ref{lemma:lambda_update} with initial value 
$\lambda_0 \geq 0$.
\end{theorem}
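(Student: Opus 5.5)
We substitute the explicit expressions of Lemma~\ref{lemma:fractional_decomposition} into the parametric objective $F(\lambda_n,\mathbf{q}) = \mathbf{n}(\mathbf{q}) - \lambda_n \mathbf{d}(\mathbf{q})$ of Lemma~\ref{lemma:dinkelbach}. Then we collect quadratic and linear monomials in $q_\alpha$, and finally absorb the linear part into the diagonal using $q_\alpha^2 = q_\alpha$.

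\textbf{Step 1 (expansion).} Using Lemma~\ref{lemma:fractional_decomposition} and $-\lambda_n \mathbf{d}(\mathbf{q}) = -\lambda_n N_S\sum_\alpha N_\alpha q_\alpha + \lambda_n \sum_{\alpha,\beta} N_\alpha N_\beta q_\alpha q_\beta$, we obtain
\begin{align*}
F(\lambda_n,\mathbf{q}) &= \sum_{\alpha,\beta} q_\alpha q_\beta\Bigl(N_S V_{\alpha,\beta} - N_\beta \textstyle\sum_\gamma V_{\alpha,\gamma} - N_\alpha \sum_\gamma V_{\gamma,\beta} + \lambda_n N_\alpha N_\beta\Bigr) \\
&\quad + \sum_\alpha \bigl(N_S^2 \operatorname{Var}_S - N_S\lambda_n\bigr) N_\alpha q_\alpha .
\end{align*}
This identifies $Q_{\alpha,\beta}$ and $L_\alpha$ exactly as stated. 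The computation is purely algebraic and holds for every $\mathbf{q}\in\{0,1\}^M$, since the polynomial expressions of $\mathbf{n}$ and $\mathbf{d}$ are defined on the whole cube.

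\textbf{Step 2 (symmetry and matrix form).} $Q$ is symmetric because $V_{\alpha,\beta}=V_{\beta,\alpha}$ by definition. Swapping $\alpha\leftrightarrow\beta$ exchanges the two middle terms and leaves $\lambda_n N_\alpha N_\beta$ invariant. Since $q_\alpha^2=q_\alpha$ for binary variables, we set $H = Q + \operatorname{diag}(L)$. Then $\mathbf{q}^\top H\mathbf{q} = \sum_{\alpha,\beta} Q_{\alpha,\beta}q_\alpha q_\beta + \sum_\alpha L_\alpha q_\alpha$, so the argmins coincide and $H$ is symmetric.

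\textbf{Step 3 (reduction of the node-splitting problem).} It remains to justify the phrase ``reduces to'', including the passage from $\mathcal{Q}^*$ to $\{0,1\}^M$. By Lemmas~\ref{Lemma:binary-sum-var} and \ref{lemma:fractional_decomposition}, the splitting cost \eqref{eq:min_var} equals $\mathbf{n}(\mathbf{q})/\mathbf{d}(\mathbf{q})$ on $\mathcal{Q}^*$, where $\mathbf{d}>0$. Lemmas~\ref{lemma:dinkelbach} and \ref{lemma:optim_condition} then show that minimizing the ratio is equivalent to solving the parametric problems $(P_{\lambda_n})$ iteratively. To pass to the full cube, we evaluate $F$ at the trivial vectors. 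At $\mathbf{q}=\mathbf{0}$ we have $F=0$. At $\mathbf{q}=\mathbf{1}$ we have $\mathbf{d}=N_S^2-N_S^2=0$, and $\mathbf{n}(\mathbf{1})$ should vanish: $\sum_{\alpha,\beta}V_{\alpha\beta}=N_S^2\operatorname{Var}_S$ by Lemma~\ref{lemma:variance_identity}, so $\mathbf{n}(\mathbf{1}) = N_S\cdot N_S^2\operatorname{Var}_S - 2N_S\cdot N_S^2\operatorname{Var}_S + N_S^3\operatorname{Var}_S = 0$. The trivial vectors therefore contribute value $0$, consistent with the empty-node convention. If such a vector is returned, Lemma~\ref{lemma:lambda_update}(ii) supplies the $\lambda$-update, so the iteration stays well defined over $\{0,1\}^M$.

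The main obstacle is this last step: making precise in what sense the unconstrained problem over $\{0,1\}^M$ ``reduces'' the original problem over $\mathcal{Q}^*$. If $F(\lambda_n)<0$, a minimizer lies in $\mathcal{Q}^*$ and the two argmins agree. If $F(\lambda_n)=0$, the trivial vectors tie with the optimum, which by Lemma~\ref{lemma:optim_condition} signals $\lambda_n=\lambda^*$. I would state the theorem's content as the algebraic identity of Steps 1--2, with Step 3 justifying its use inside the Dinkelbach iteration.
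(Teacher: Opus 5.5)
Your Steps 1--2 are the paper's proof: substitute the expressions for $\mathbf{n}$ and $\mathbf{d}$ from Lemma~\ref{lemma:fractional_decomposition} into $\mathbf{n}-\lambda_n\mathbf{d}$ and regroup the terms, and your symmetry check and the choice $H=Q+\operatorname{diag}(L)$ spell out what the paper leaves implicit. Step 3 is not part of the paper's proof here (trivial vectors are treated in Lemmas~\ref{F_properties} and~\ref{lemma:convergence_equivalence}), and its claim that a zero minimum signals $\lambda_n=\lambda^*$ needs $\lambda_n\ge\lambda^*$: this holds for updated values, since each equals $\mathcal{R}(\mathbf{q})$ for some $\mathbf{q}\in\mathcal{Q}^*$ or equals $N_S\operatorname{Var}_S$, but it fails whenever $\lambda_n<\lambda^*$ (e.g.\ $\lambda_0=0$), where the minimum is $0$ but the trivial vectors are strict minimizers rather than ties.
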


This formulation allows the node-splitting problem to be solved efficiently using QUBO solvers or quantum-inspired optimization techniques, while fully preserving the statistical structure of the original variance-based objective.

\subsubsection{Properties of $F$ and Convergence Analysis}

We now characterize the structural properties of the objective function $F$ induced by the QUBO formulation.

\begin{lemma}[Properties of $F$]
\label{F_properties}
For any $\lambda \geq 0$ and any binary vector $\mathbf{q}$, the following properties hold:
\begin{enumerate}[label=\roman*)]
    \item \textbf{Binary-flip symmetry:} $F(\lambda, \mathbf{q})  = F(\lambda, \bar{\mathbf{q}})$, 
    where $\bar{\mathbf{q}} = \mathbf{1} - \mathbf{q}$.
    \item \textbf{Trivial solutions:} $F(\lambda, \mathbf{q_{0}}) = F(\lambda, \bar{\mathbf{q}}_{0}) = 0$ 
    for $\mathbf{q_0} = \mathbf{0}$ and $\bar{\mathbf{q}}_0 = \mathbf{1}$.
\end{enumerate}
\end{lemma}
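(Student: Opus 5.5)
The plan is to work with $F(\lambda,\mathbf{q}) = \mathbf{n}(\mathbf{q}) - \lambda\,\mathbf{d}(\mathbf{q})$, with $\mathbf{n}$ and $\mathbf{d}$ the polynomial expressions of Lemma~\ref{lemma:fractional_decomposition}. These are extended to all of $\{0,1\}^M$, as in Theorem~\ref{theorem:iterative_qubo}. Rather than manipulating the coefficients $Q_{\alpha,\beta}$ and $L_\alpha$ directly, I would first rewrite $\mathbf{n}$ and $\mathbf{d}$ in a manifestly symmetric form. Both properties then follow by inspection.

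\textbf{Step 1: a symmetric form of $\mathbf{d}$.} Since $\sum_\alpha N_\alpha = N_S$, we have
\begin{equation*}
\mathbf{d}(\mathbf{q}) = \Bigl(\sum_\alpha N_\alpha q_\alpha\Bigr)\Bigl(N_S - \sum_\beta N_\beta q_\beta\Bigr) = \Bigl(\sum_\alpha N_\alpha q_\alpha\Bigr)\Bigl(\sum_\beta N_\beta (1-q_\beta)\Bigr).
\end{equation*}
This is the product $N_L(\mathbf{q})\,N_R(\mathbf{q})$ of the two linear forms, and it holds for every binary $\mathbf{q}$. Replacing $\mathbf{q}$ by $\bar{\mathbf{q}}$ swaps the two factors, so $\mathbf{d}(\bar{\mathbf{q}}) = \mathbf{d}(\mathbf{q})$. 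Moreover $\mathbf{d}(\mathbf{0}) = 0 \cdot N_S = 0$ and $\mathbf{d}(\mathbf{1}) = N_S \cdot 0 = 0$.

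\textbf{Step 2: a symmetric form of $\mathbf{n}$.} This step carries the only real computation. I would show that, as a polynomial identity on $\{0,1\}^M$,
\begin{equation*}
\mathbf{n}(\mathbf{q}) = N_R(\mathbf{q}) \sum_{\alpha,\beta} q_\alpha q_\beta V_{\alpha,\beta} + N_L(\mathbf{q}) \sum_{\alpha,\beta} (1-q_\alpha)(1-q_\beta) V_{\alpha,\beta}.
\end{equation*}
This is the form obtained by putting the two terms of Lemma~\ref{Lemma:binary-sum-var} over the common denominator $N_L N_R$.

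To verify it, I would expand the right-hand side and compare term by term with Lemma~\ref{lemma:fractional_decomposition}. Write $N_R = N_S - \sum_\gamma N_\gamma q_\gamma$ and $N_L = \sum_\gamma N_\gamma q_\gamma$. Expand $(1-q_\alpha)(1-q_\beta)$, use the symmetry $V_{\alpha,\beta} = V_{\beta,\alpha}$, and use the identity $\sum_{\alpha,\beta} V_{\alpha,\beta} = N_S^2 \operatorname{Var}_S$ from Lemma~\ref{lemma:variance_identity}.

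The expansion produces cubic terms $q_\alpha q_\beta q_\gamma$. These must cancel: they appear as $-N_\gamma q_\gamma q_\alpha q_\beta V_{\alpha,\beta}$ from the first summand and $+N_\gamma q_\gamma q_\alpha q_\beta V_{\alpha,\beta}$ from the second, so they do. After this cancellation, and after using $q_\alpha^2 = q_\alpha$ where needed, the remaining quadratic and linear terms should match the stated $\mathbf{n}$ exactly.

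I expect this bookkeeping to be the main obstacle. If any mismatch appears, it is acceptable to fall back on the observation that the identity holds on $\mathcal{Q}^*$ by Lemmas~\ref{Lemma:binary-sum-var} and~\ref{lemma:fractional_decomposition}. The endpoints $\mathbf{0}$ and $\mathbf{1}$ can then be checked by direct substitution, as in Step 4.

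\textbf{Step 3: binary-flip symmetry.} Under $\mathbf{q} \mapsto \bar{\mathbf{q}}$, the two linear forms swap, $N_L \leftrightarrow N_R$. The two quadratic sums $\sum q_\alpha q_\beta V_{\alpha,\beta}$ and $\sum (1-q_\alpha)(1-q_\beta)V_{\alpha,\beta}$ also swap. Hence the symmetric form of Step 2 gives $\mathbf{n}(\bar{\mathbf{q}}) = \mathbf{n}(\mathbf{q})$. Combined with Step 1, this yields $F(\lambda,\bar{\mathbf{q}}) = F(\lambda,\mathbf{q})$ for every $\lambda \ge 0$, which is property (i).

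\textbf{Step 4: trivial solutions.} Take $\mathbf{q} = \mathbf{0}$ in the formula of Lemma~\ref{lemma:fractional_decomposition}. Every term contains some $q_\alpha$, so $\mathbf{n}(\mathbf{0}) = 0$, and Step 1 gives $\mathbf{d}(\mathbf{0}) = 0$. Thus $F(\lambda,\mathbf{0}) = 0$, and property (i) then gives $F(\lambda,\mathbf{1}) = 0$.

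As an independent check at $\mathbf{q} = \mathbf{1}$, substitute directly. The quadratic part of $\mathbf{n}$ equals $N_S\Sigma - N_S\Sigma - N_S\Sigma = -N_S\Sigma$, where $\Sigma = \sum_{\alpha,\beta} V_{\alpha,\beta} = N_S^2\operatorname{Var}_S$. The linear part equals $+N_S^2\operatorname{Var}_S \cdot N_S = N_S\Sigma$. These cancel, so $\mathbf{n}(\mathbf{1}) = 0$, which confirms property (ii).
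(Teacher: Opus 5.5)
Your proposal is correct and follows essentially the same route as the paper: write $\mathbf{n}(\mathbf{q}) = N_R\sum_{\alpha,\beta}q_\alpha q_\beta V_{\alpha,\beta} + N_L\sum_{\alpha,\beta}(1-q_\alpha)(1-q_\beta)V_{\alpha,\beta}$ and $\mathbf{d}(\mathbf{q}) = N_L N_R$, observe that complementation swaps the two pieces, and check that both vanish at $\mathbf{0}$ and $\mathbf{1}$. Your Step~2 checks that this symmetric form agrees, as a polynomial on all of $\{0,1\}^M$, with the expression of Lemma~\ref{lemma:fractional_decomposition}; it does, with the cubic terms cancelling exactly as you describe. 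That check makes explicit what the paper takes for granted when it simply ``recalls'' this form, and it lets you avoid the paper's case split on whether $\mathbf{d}(\mathbf{q})>0$ or $\mathbf{d}(\mathbf{q})=0$.
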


The binary-flip symmetry property shows that any vector $\mathbf{q}$ and its binary complement $\bar{\mathbf{q}}$, obtained by flipping all binary entries, achieve the same objective value. In addition, the trivial solutions $\mathbf{q}_0$ and $\bar{\mathbf{q}}_{0}$ always satisfy $F = 0$, for any $\lambda$. Since the optimal pair $(\lambda^*, \mathbf{q}^*)$ also satisfies $F(\lambda^*, \mathbf{q}^*) = 0$ (Lemma~\ref{lemma:optim_condition}), the condition $F=0$ alone does not guarantee convergence. A reinforced convergence criterion that explicitly excludes trivial solutions is therefore required. Lemma~\ref{lemma:convergence_equivalence} characterizes all zeros of $F$, showing that $F= 0$ corresponds either to a trivial solution or to the optimal solution pair.

\begin{lemma}[Convergence equivalence]
\label{lemma:convergence_equivalence}
Let $(\lambda^*, \mathbf{q}^*)$ be the global optimum, and let \\$\lambda_n = \mathbf{n}(\mathbf{q}^*_{n-1})/
\mathbf{d}(\mathbf{q}^*_{n-1})$ be the parameter updated at iteration $n-1$, 
where $\mathbf{q}^*_{n-1}$ is the optimal solution at the previous iteration.
The condition $F(\lambda_n, \mathbf{q}^*_n) = 0$ holds if and only if 
$\mathbf{q}^*_n \in \{\mathbf{0}, \mathbf{1}\}$ (trivial solution) or 
$(\lambda_n, \mathbf{q}^*_n) = (\lambda^*, \mathbf{q}^*)$.
\end{lemma}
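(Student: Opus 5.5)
The plan is to work directly with the parametric form $F(\lambda,\mathbf{q}) = \mathbf{n}(\mathbf{q}) - \lambda\,\mathbf{d}(\mathbf{q})$ and to use the fact that, for any nontrivial $\mathbf{q}\in\mathcal{Q}^*$, $\mathbf{d}(\mathbf{q}) = N_LN_R > 0$ and $\mathbf{n}(\mathbf{q})/\mathbf{d}(\mathbf{q}) = \mathcal{R}(\mathbf{q})$ (Lemma~\ref{lemma:fractional_decomposition}). This gives
\begin{equation*}
F(\lambda,\mathbf{q}) = \mathbf{d}(\mathbf{q})\bigl(\mathcal{R}(\mathbf{q}) - \lambda\bigr), \qquad \mathbf{q}\in\mathcal{Q}^*.
\end{equation*}
The first preliminary step is to show that $\lambda_n \ge \lambda^*$ for every $n\ge 1$. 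There are two cases, following Lemma~\ref{lemma:lambda_update}.
\begin{enumerate}[label=\alph*)]
\item If $\mathbf{q}^*_{n-1}$ is nontrivial, then $\lambda_n = \mathcal{R}(\mathbf{q}^*_{n-1}) \ge \min_{\mathcal{Q}^*}\mathcal{R} = \lambda^*$.
\item If $\mathbf{q}^*_{n-1}$ is trivial, then $\lambda_n = N_S\operatorname{Var}_S$. This is still at least $\lambda^*$, because splitting a node never increases the within-node sum of squares: $N_S\operatorname{Var}_S \ge N_L\operatorname{Var}_L + N_R\operatorname{Var}_R$ for any split, by the usual between/within decomposition.
\end{enumerate}

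For the ``if'' direction there are again two cases. If $\mathbf{q}^*_n\in\{\mathbf{0},\mathbf{1}\}$, then $F=0$ by Lemma~\ref{F_properties}(ii). If $(\lambda_n,\mathbf{q}^*_n)=(\lambda^*,\mathbf{q}^*)$, then $F(\lambda^*,\mathbf{q}^*) = \mathbf{d}(\mathbf{q}^*)\bigl(\mathcal{R}(\mathbf{q}^*)-\lambda^*\bigr) = 0$ by the definition of $\lambda^*$. This is the content of Lemma~\ref{lemma:optim_condition}.

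For the ``only if'' direction, assume $F(\lambda_n,\mathbf{q}^*_n)=0$ with $\mathbf{q}^*_n$ nontrivial. Then $\mathbf{d}(\mathbf{q}^*_n)>0$, so the factorization forces $\mathcal{R}(\mathbf{q}^*_n)=\lambda_n$. Next I use the minimality of $\mathbf{q}^*_n$ over $\{0,1\}^M$, comparing it with the global optimum $\mathbf{q}^*$:
\begin{equation*}
0 = F(\lambda_n,\mathbf{q}^*_n) \le F(\lambda_n,\mathbf{q}^*) = \mathbf{d}(\mathbf{q}^*)\bigl(\lambda^*-\lambda_n\bigr).
\end{equation*}
Since $\mathbf{d}(\mathbf{q}^*)>0$, this gives $\lambda_n\le\lambda^*$. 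Together with the preliminary bound, $\lambda_n=\lambda^*$. Hence $\mathcal{R}(\mathbf{q}^*_n)=\lambda^*$, so $\mathbf{q}^*_n$ attains the global minimum of $\mathcal{R}$ on $\mathcal{Q}^*$.

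The main obstacle is the last identification, $\mathbf{q}^*_n=\mathbf{q}^*$. The minimizer is never unique: by Lemma~\ref{F_properties}(i), $\bar{\mathbf{q}}^*$ is always also optimal, and ties between distinct partitions may occur. I would therefore state explicitly that $(\lambda^*,\mathbf{q}^*)$ denotes the optimal value together with \emph{an} optimal split, understood up to binary-flip symmetry. Under that reading the equality is exactly what was derived. A secondary point needing care is the preliminary inequality $N_S\operatorname{Var}_S\ge\lambda^*$. I would justify it in one line from Lemma~\ref{lemma:variance_identity}, or from the decomposition of the total sum of squares into within-child and between-child parts.
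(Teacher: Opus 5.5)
Your proof is correct and follows the paper's argument. You factor $F(\lambda_n,\mathbf{q})=\mathbf{d}(\mathbf{q})\bigl(\mathcal{R}(\mathbf{q})-\lambda_n\bigr)$ on $\mathcal{Q}^*$, read off $\lambda_n=\mathcal{R}(\mathbf{q}^*_n)$ from $F(\lambda_n,\mathbf{q}^*_n)=0$, and compare with $\mathbf{q}^*$ via minimality to get $\lambda_n\le\lambda^*$. Your preliminary bound $\lambda_n\ge\lambda^*$ through $\mathbf{q}^*_{n-1}$ and the between/within decomposition is redundant, because the identity $\lambda_n=\mathcal{R}(\mathbf{q}^*_n)\ge\lambda^*$ you derive already gives it; this is how the paper gets it, and it makes the argument independent of how $\lambda_n$ was produced.

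Your caveat that $\mathbf{q}^*_n$ can only be identified with $\mathbf{q}^*$ up to ties and the flip $\bar{\mathbf{q}}^*$ is well taken. The paper's last step, that $\mathcal{R}(\mathbf{q}^*_n)=\lambda^*$ ``means'' $\mathbf{q}^*_n=\mathbf{q}^*$, silently assumes this reading. The paper's own results show the reading is needed: for \emph{VehBody} on \texttt{ausprivauto}, the Exact and Gurobi runs end at complementary vectors.
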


In light of Lemma~\ref{lemma:convergence_equivalence}, Theorem~\ref{theorem:convergence} establishes a reinforced finite-time convergence criterion that excludes trivial solutions.

\begin{theorem}
\label{theorem:convergence}
If there exists $k \geq 1$ such that $F(\lambda_k, \mathbf{q}^*_k) = 0$ with $\mathbf{q}^*_k \in \mathcal{Q}^*$, then the algorithm converges in finite time and $(\lambda_k, \mathbf{q}^*_k) = (\lambda^*, \mathbf{q}^*)$.
\end{theorem}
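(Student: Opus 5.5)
The proof will follow directly from Lemma~\ref{lemma:convergence_equivalence}, with the admissibility condition used to rule out the trivial branch. Suppose there exists $k \geq 1$ with $F(\lambda_k, \mathbf{q}^*_k) = 0$ and $\mathbf{q}^*_k \in \mathcal{Q}^*$. By Lemma~\ref{lemma:convergence_equivalence}, the equality $F(\lambda_k,\mathbf{q}^*_k)=0$ holds only if either $\mathbf{q}^*_k \in \{\mathbf{0},\mathbf{1}\}$ or $(\lambda_k,\mathbf{q}^*_k) = (\lambda^*,\mathbf{q}^*)$. By definition, $\mathcal{Q}^* = \{0,1\}^M\setminus\{\mathbf{0},\mathbf{1}\}$, so the first alternative is excluded. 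Hence $(\lambda_k,\mathbf{q}^*_k) = (\lambda^*,\mathbf{q}^*)$.

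As an independent sanity check, the same conclusion can be derived directly from Lemma~\ref{lemma:optim_condition}. Since $\mathbf{q}^*_k\in\mathcal{Q}^*$, we have $\mathbf{d}(\mathbf{q}^*_k)=N_LN_R>0$. The condition $F(\lambda_k,\mathbf{q}^*_k)=0$ then gives $\mathbf{n}(\mathbf{q}^*_k)/\mathbf{d}(\mathbf{q}^*_k) = \lambda_k$. Minimality of $\mathbf{q}^*_k$ for $F(\lambda_k,\cdot)$ yields $\mathbf{n}(\mathbf{q}) - \lambda_k \mathbf{d}(\mathbf{q}) \geq 0$ for every $\mathbf{q}\in\mathcal{Q}^*$. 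Dividing by $\mathbf{d}(\mathbf{q})>0$ shows that $\mathcal{R}(\mathbf{q}) \ge \lambda_k = \mathcal{R}(\mathbf{q}^*_k)$. Therefore $\lambda_k = \lambda^*$, and $\mathbf{q}^*_k$ is a global minimizer of $\mathcal{R}$ over $\mathcal{Q}^*$.

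Finite-time convergence then follows from the update rule of Lemma~\ref{lemma:lambda_update}(i). Since $\mathbf{d}(\mathbf{q}^*_k)>0$, it gives $\lambda_{k+1} = \mathbf{n}(\mathbf{q}^*_k)/\mathbf{d}(\mathbf{q}^*_k) = \lambda_k$, so the iteration is stationary from step $k$ onward. Stopping at iteration $k$ therefore returns the optimal pair after finitely many QUBO solves. In any case, only finitely many distinct values $\mathcal{R}(\mathbf{q})$, $\mathbf{q}\in\{0,1\}^M$, exist.

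The main obstacle is a subtlety about what "global optimum" means when it is not unique. Because of the binary-flip symmetry in Lemma~\ref{F_properties}, both $\mathbf{q}^*$ and $\bar{\mathbf{q}}^*$ are optima, and other ties may occur as well. The identification $(\lambda_k,\mathbf{q}^*_k)=(\lambda^*,\mathbf{q}^*)$ must therefore be read up to the set of minimizers, that is, $\mathbf{q}^*_k\in\arg\min_{\mathcal{Q}^*}\mathcal{R}$. The direct argument above already establishes exactly this. It also avoids depending on how Lemma~\ref{lemma:convergence_equivalence} handles the case where the minimizer over $\{0,1\}^M$ is only attained at $F=0$ with ties.
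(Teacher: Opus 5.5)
Your proof is correct and follows essentially the paper's own argument. You invoke Lemma~\ref{lemma:convergence_equivalence}, exclude the trivial branch because $\mathbf{q}^*_k \in \mathcal{Q}^*$, and then use Lemma~\ref{lemma:lambda_update}(i) to get $\lambda_{k+1} = \mathcal{R}(\mathbf{q}^*_k) = \lambda_k$; your extra direct Dinkelbach check is sound, and so is your caveat that $\mathbf{q}^*_k = \mathbf{q}^*$ should be read as $\mathbf{q}^*_k \in \arg\min_{\mathcal{Q}^*}\mathcal{R}$, since the paper's lemma tacitly assumes a unique minimizer, which binary-flip symmetry rules out.
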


In practice, given a tolerance $\delta > 0$, the algorithm stops at iteration $k$ if $|F(\lambda_k, \mathbf{q}_k)| \leq \delta$ with $\mathbf{q}_k \in \mathcal{Q}^*$; otherwise, $\lambda$ is updated according to Lemma~\ref{lemma:lambda_update}.

\begin{remark}
When using a QUBO solver (for instance, \texttt{Gurobi}), trivial solutions are discarded at the algorithmic level in accordance with Theorem~\ref{theorem:convergence}. This strategy enforces 
non-triviality without introducing additional slack variables into 
the QUBO formulation.
\end{remark}

We conclude this analysis by establishing fundamental properties of the parameter $\lambda$ and the objective function $F(\lambda, \mathbf{q})$. These properties serve two practical purposes: they provide a theoretically grounded initialization strategy by showing that $\lambda_0 = N_S \operatorname{Var}_S$ is a valid and natural upper bound for $\lambda^*$, and they guarantee that, under this initialization, the sequence $(\lambda_n)$ decreases monotonically and converges to the optimal parameter $\lambda^*$.

\begin{lemma}[Properties of $\lambda$ and $F(\lambda, \mathbf{q})$]
\label{lemma:lambda_properties}
Let $\mathbf{q^*}$ denote the optimal non-trivial solution and $\lambda^*$, the associated optimal parameter. Then,
\begin{enumerate}[label=\roman*)]
    \item $\lambda^* \leq N_S \operatorname{Var}_S$, where $N_S$ and 
    $\operatorname{Var}_S$ denote the number of observations and the variance of the parent node, respectively.
    \item For any $\lambda < \lambda^*$ and any non-trivial $\mathbf{q} \in \mathcal{Q^*}$, $F(\lambda, \mathbf{q})  > 0$ holds, and the minimum is attained by a trivial solution.
    
    \item For any $\lambda \geq N_S \operatorname{Var}_S$, there exists a non-trivial $\mathbf{q}$ such that $F(\lambda, \mathbf{q}) \le 0$, and the minimum is attained by a non-trivial solution.
    \item If $\lambda_0 \geq N_S \operatorname{Var}_S$, then the sequence $(\lambda_n)$ generated by the update rule of Lemma~\ref{lemma:lambda_update} is monotonically decreasing and converges to $\lambda^*$.
\end{enumerate}
\end{lemma}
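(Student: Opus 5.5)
The plan is to work throughout with the identity $F(\lambda,\mathbf{q}) = \mathbf{d}(\mathbf{q})\bigl(\mathcal{R}(\mathbf{q})-\lambda\bigr)$, valid for every $\mathbf{q}\in\mathcal{Q}^*$. It follows from Lemma~\ref{lemma:fractional_decomposition}, where $\mathbf{d}(\mathbf{q})=N_LN_R>0$ on $\mathcal{Q}^*$. On the trivial vectors we use Lemma~\ref{F_properties}, which gives $F(\lambda,\mathbf{0})=F(\lambda,\mathbf{1})=0$. Every item then reduces to sign comparisons between $\mathcal{R}(\mathbf{q})$ and $\lambda$, together with the single inequality $\mathcal{R}(\mathbf{q})\le N_S\operatorname{Var}_S$.

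For i), I will use the standard decomposition of the total sum of squares. Write $N_S\operatorname{Var}_S=\sum_{i\in\mathcal{N}_t}(Y_i-\bar Y_t)^2$. Splitting it over $\mathcal{N}_{t_L}$ and $\mathcal{N}_{t_R}$ gives
\[ N_S\operatorname{Var}_S=N_L\operatorname{Var}_L+N_R\operatorname{Var}_R+N_L(\bar Y_{t_L}-\bar Y_t)^2+N_R(\bar Y_{t_R}-\bar Y_t)^2 . \]
This shows $\mathcal{R}(\mathbf{q})\le N_S\operatorname{Var}_S$ for every $\mathbf{q}\in\mathcal{Q}^*$. In particular $\lambda^*=\min_{\mathcal{Q}^*}\mathcal{R}\le N_S\operatorname{Var}_S$. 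One can also derive this directly from the pairwise form in Lemma~\ref{lemma:variance_identity}, by discarding the nonnegative cross terms $V_{\alpha\beta}$ with $q_\alpha\ne q_\beta$ and comparing the denominators $N_L,N_R\le N_S$. However, the between-group decomposition is cleaner and is the one I will use.

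For ii), take $\lambda<\lambda^*$ and $\mathbf{q}\in\mathcal{Q}^*$. Then $\mathcal{R}(\mathbf{q})\ge\lambda^*>\lambda$, so $F(\lambda,\mathbf{q})=\mathbf{d}(\mathbf{q})(\mathcal{R}(\mathbf{q})-\lambda)>0$. Since the trivial vectors give $F=0$, the minimum over $\{0,1\}^M$ is attained only at $\mathbf{0}$ or $\mathbf{1}$. For iii), take $\lambda\ge N_S\operatorname{Var}_S$. By i) we get $\mathcal{R}(\mathbf{q}^*)\le N_S\operatorname{Var}_S\le\lambda$, so $F(\lambda,\mathbf{q}^*)\le0$. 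Hence the minimum value is $\le0$ and it is attained at a non-trivial vector. The equality case needs care, because a trivial vector also achieves $0$. If equality holds throughout, the minimum is still attained by $\mathbf{q}^*$, which is non-trivial. This "attained by" wording is how I will phrase it, and it requires $M\ge2$ so that $\mathcal{Q}^*\neq\emptyset$.

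For iv), I will argue by induction that $\lambda^*\le\lambda_{n+1}\le\lambda_n$. At step $n$ with $\lambda_n\ge\lambda^*$, the value $F(\lambda_n,\mathbf{q}^*)=\mathbf{d}(\mathbf{q}^*)(\lambda^*-\lambda_n)\le0$, so the minimum is $\le0$ and can be taken at a non-trivial minimizer $\mathbf{q}^*_n$. Then $\mathbf{d}(\mathbf{q}^*_n)(\mathcal{R}(\mathbf{q}^*_n)-\lambda_n)\le0$ gives $\lambda_{n+1}=\mathcal{R}(\mathbf{q}^*_n)\le\lambda_n$. Also $\lambda_{n+1}\ge\lambda^*$ holds by minimality of $\lambda^*$. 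The base case $\lambda_0\ge N_S\operatorname{Var}_S\ge\lambda^*$ comes from i), and case ii) of Lemma~\ref{lemma:lambda_update} never produces an increase beyond $N_S\operatorname{Var}_S$.

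The main obstacle is ruling out stalling above $\lambda^*$; I will handle it with Dinkelbach's strictness argument. If $\lambda_n>\lambda^*$, the minimum is strictly negative, since $F(\lambda_n,\mathbf{q}^*)<0$. Therefore $\lambda_{n+1}<\lambda_n$ strictly. Because $\mathcal{R}$ takes finitely many values on $\mathcal{Q}^*$, a strictly decreasing sequence bounded below by $\lambda^*$ must reach $\lambda^*$ after finitely many steps. It then stays there, by Lemma~\ref{lemma:optim_condition}. A subtle point is tie-breaking when $\lambda_n=\lambda^*$, where trivial and optimal vectors both give $0$. The algorithm discards trivial solutions (Theorem~\ref{theorem:convergence}), so it selects $\mathbf{q}^*$ and the sequence stays at $\lambda^*$.
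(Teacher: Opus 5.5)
Your proof is correct, and for i)--iii) and the monotonicity half of iv) it is the paper's argument. The paper also works with $F(\lambda,\mathbf{q})=\mathbf{d}(\mathbf{q})(\mathcal{R}(\mathbf{q})-\lambda)$ on $\mathcal{Q}^*$ and uses $F=0$ at $\mathbf{0},\mathbf{1}$. For i) it only asserts that splitting cannot increase the weighted variance; your within/between sum-of-squares decomposition makes that explicit.

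The genuine difference is in the convergence half of iv). The paper splits its inductive step according to whether the returned minimizer is trivial, and shows that a trivial minimizer forces $\lambda_n=\lambda^*$. It then takes the monotone limit $\bar\lambda\ge\lambda^*$ and identifies it with $\lambda^*$ through the stopping rule and Lemma~\ref{lemma:convergence_equivalence}. That last step presupposes an index $k$ with $\lambda_{k+1}=\lambda_k$, which monotone convergence alone does not provide. Your route supplies exactly this missing ingredient, in three steps:
\begin{itemize}
\item a non-trivial minimizer always exists once $\lambda_n\ge\lambda^*$;
\item the decrease is strict whenever $\lambda_n>\lambda^*$, because $F(\lambda_n,\mathbf{q}^*)<0$;
\item $\lambda_{n+1}$ lies in the finite set $\{\mathcal{R}(\mathbf{q}):\mathbf{q}\in\mathcal{Q}^*\}$.
\end{itemize}
Together these give finite termination, within at most $2^{M-1}-1$ updates by flip symmetry. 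Your tie-breaking at $\lambda_n=\lambda^*$ (discard trivial solutions) is also what the paper relies on implicitly when it writes $\lambda_{n+1}=\mathcal{R}(\mathbf{q}^*)$ in its Case~2.

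Two small remarks. First, your sentence that case ii) of Lemma~\ref{lemma:lambda_update} ``never produces an increase beyond $N_S\operatorname{Var}_S$'' is beside the point. A reset from $\lambda^*<N_S\operatorname{Var}_S$ up to $N_S\operatorname{Var}_S$ would itself break monotonicity, so what does the work is the non-trivial tie-breaking you invoke at the end.

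Second, the pairwise alternative you sketch for i) would not go through as described. Since $N_L,N_R\le N_S$, you have $1/N_L,1/N_R\ge 1/N_S$, which pushes the comparison the wrong way. The discarded cross terms are exactly what pays for the smaller denominators, as the between-group term $\frac{N_LN_R}{N_S}(\bar Y_{t_L}-\bar Y_{t_R})^2$ in your decomposition shows. You rightly chose the decomposition instead.
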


\subsection{Implementation}
This section describes the practical implementation of the proposed methodology, including the computational environment, optimization solvers, and the main algorithms used to construct and prune the decision trees.

\subsubsection{Optimization Solvers and Computational Settings}
We employed multiple optimization solvers encompassing both exact and heuristic strategies to ensure robust solution procedures. In particular, we utilized the Python interfaces of \texttt{Gurobi}, \texttt{neal}, and \texttt{dimod} for QUBO modeling and solution analysis. Final implementation and numerical experiments were conducted using the academic version of \texttt{Gurobi solver (academic license for non-commercial use, license ID 2642785)}. The computational environment is fully reproducible. All experiments were executed in a \texttt{Python 3.12.10} with standard QUBO-compatible libraries. Solver configurations were fixed across experiments to guarantee numerical consistency and result reproducibility. Source code and configuration files are available upon reasonable request, enabling independent verification of the reported findings. The proposed implementation integrates seamlessly with existing Python frameworks for QUBO modeling and optimization, supporting methodological transparency and computational reproducibility.

\subsubsection{QUBO tree construction process}
We propose a new version of the CART algorithm that integrates QUBO into the tree construction process. In this approach, the splitting strategy depends on the nature of the predictors: for categorical variables, the search for the optimal split is formulated and solved as a QUBO optimization problem, whereas for continuous and ordinal variables, the algorithm retains the standard splitting rule implemented in \texttt{rpart}, based on an exhaustive search over candidate thresholds. Except for this specific treatment of categorical variables, the overall tree structure and stopping criteria largely follow those of the conventional \texttt{rpart} package. The formal representation of the algorithm is presented in Algorithm~\ref{alg:cart_qubo} (see Appendix~\ref{Appendix:algo build tree} in the Online Supplement).

\subsubsection{Pruning and Final Tree Selection}

Following the construction of the maximal tree $\mathcal{A}_{\max}$, CART applies the standard cost--complexity pruning procedure to control model complexity \citep{breiman1984cart}. This approach selects a sequence of nested subtrees by minimizing the cost--complexity criterion
\begin{equation}
C_{\alpha}(\mathcal{A}) = R_{\mathcal{A}} + \alpha\,|\mathcal{A}|, \qquad \alpha \ge 0,
\end{equation}
which balances empirical risk and tree size. As shown by \citet{breiman1984cart}, this yields a finite set of candidate subtrees corresponding to increasing values of the regularization parameter $\alpha$. The final tree is selected using a validation-based model selection strategy, where the subtree minimizing the prediction error on a validation set is retained, following standard practice in statistical learning \citep{hastie2009elements, james2013introduction}. The practical implementation of the cost--complexity pruning procedure used in this work is detailed in Algorithm~\ref{alg:cost_complexity} in Appendix~\ref{Appendix:algo build tree} in the Online Supplement. The test set is used exclusively for final performance evaluation.


\section{Numerical Results}

We present an empirical evaluation of our QUBO formulation for the CART splitting optimization problem. Our approach is assessed across multiple datasets in a comprehensive comparative study, including both simulated data and real-world insurance data. We focus on insurance datasets, as regression tree models have been successfully applied to individual loss reserving \citep{lopez2016tree, janouvsek2025bagging} and credibility modeling \citep{diao2019regression}. The main objectives of this empirical evaluation are as follows:
\begin{enumerate}[label=\Roman*.]
\setlength{\itemsep}{-0.25em}
    \item Compare the optimal solutions obtained from the QUBO solvers with those from exact optimization for categorical variables.
    \item Analyze the convergence behavior of the iterative QUBO procedure for the split optimization problem.

    \item Evaluate the predictive performance of the QUBO-based regression tree in comparison with classical CART algorithms under both pre-pruning and post-pruning strategies.
\end{enumerate}

\subsection{Dataset Overview}

Each dataset used in this work is described by its source, simulation procedure, or generation process. This information enables the reproduction of the results.

\subsubsection{Simulated Data}

We generated one simple synthetic dataset and three synthetic insurance claim datasets of varying sizes. 
The dataset \textbf{df\_tricky\_means} contains $5{,}000$ observations and consists of a single categorical explanatory variable and one response variable. 
The response variable represents the automobile price (\textbf{price}), while the explanatory variable corresponds to the vehicle color (\textbf{color}). We also generated the dataset \textbf{datagen}, with sample sizes of $10{,}000$ and $50{,}000$ observations, and the dataset \textbf{df}, which contains $20{,}000$ observations. Both  \textbf{df} and \textbf{datagen} include multiple explanatory variables of mixed types (continuous, categorical, and binary), namely \textbf{Brand}, \textbf{Color}, \textbf{Mileage\_km}, and \textbf{HasClaim}. The response variable is the claim amount, denoted by \textbf{ClaimAmount}. The data generation procedures are described in Algorithms~\ref{alg:datagen_generation} and \ref{alg:df_generation} (see Appendix~\ref{Appendix: datasets} in the Online Supplement).

\subsubsection{Real-World Datasets}

To assess the practical performance and robustness of the proposed methodology, we conduct experiments on three real-world datasets drawn from the real estate and insurance domains. The \textbf{Ames Housing} dataset, available in the \texttt{sklearn.datasets} (\url{https://github.com/INRIA/scikit-learn-mooc/blob/main/datasets/}) Python library, is a well-known benchmark dataset for regression problems in real estate valuation. In addition, we consider two insurance datasets from the \texttt{CASdatasets}
(\url{https://github.com/dutangc/CASdatasets/blob/master/data/})
 R package: \textbf{freMPL}, a French Motor Private Line insurance dataset, and \textbf{ausprivauto}, an Australian Private Auto Line insurance dataset. These datasets provide heterogeneous feature structures and realistic claim distributions, allowing for a comprehensive empirical evaluation of the proposed approach. Table~\ref{tab:datasets_variables} shows a subset of variables selected for the analysis in each dataset.

\subsection{Results and Analysis}
We present the empirical results together with an analysis of the main findings. We first report the optimization results for both simulated and real-world datasets. Subsequently, we provide a comparative analysis of the predictive performance of the decision tree models across datasets.

\subsubsection{Optimization Performance}

We evaluate the performance of the proposed QUBO-based approach on both simulated datasets and real-world insurance datasets. The objective is to identify the optimal binary partition that minimizes the node-splitting criterion and to analyze the convergence behavior of the algorithm (see objectives I. and II.). Across all datasets considered, the QUBO formulation solved with the \texttt{Gurobi solver} yields optimal categorical splits that are identical to those obtained by exhaustive search, validating the exactness of the proposed approach.

We further analyze the convergence behavior of the iterative QUBO procedure. In all tables, the optimal binary solution $\mathbf{q^{\star}}$ (in bold) corresponds to the optimal split. Convergence results are presented for the simulated dataset \textbf{datagen} (Table~\ref{tab:datagen_convergence}) and 
the real-world dataset \textbf{freMPL} (Table~\ref{tab:frempl_convergence}); remaining results are reported in Appendix~\ref{Appendix: Optim results} (Tables~\ref{tab:tricky_color_iterations}, \ref{tab:ames_convergence}, and~\ref{tab:ausprivauto_conv}) in the Online Supplement.

\begin{table}[ht]
\centering
\scriptsize
\caption{Convergence of the QUBO algorithm on the \texttt{datagen} dataset.}
\label{tab:datagen_convergence}
\begin{tabular}{l l l l l l}
\toprule
Variable & Iteration & $\lambda_{\text{initial}}$ & Binary vector & Score & $\lambda_{\text{final}}$ \\ 
\midrule
\multirow{2}{*}{Color}
& 1 & $3.6628\times10^{11}$ & $(0,0,0,0,1,1)$ & $3.6612 \times10^{7}$ & $3.6612\times10^{11}$ \\
& 2 & $3.6612\times10^{11}$ & $\mathbf{(0,0,0,0,1,1)}$ & $3.6612 \times10^{7}$ & $3.6612\times10^{11}$ \\ 
\midrule

\multirow{2}{*}{Brand}
& 1 & $3.6628\times10^{11}$ & $(0,0,1,1,1,1,0,1,1,1)$ & $3.6495 \times10^{7}$ & $3.6495\times10^{11}$ \\
& 2 & $3.6495\times10^{11}$ & $\mathbf{(0,0,1,1,1,1,0,1,1,1)}$ & $3.6495 \times10^{7}$ & $3.6495\times10^{11}$ \\ 
\bottomrule
\end{tabular}

\footnotesize
\scriptsize
\textbf{Binary vector encoding:}

\textit{Color:}
$(q_1,\dots,q_6)=(\text{Black},\text{Blue},\text{Gray},\text{Green},\text{Red},\text{White})$. 

\textit{Brand:} $(q_1,\dots,q_{10}) = $ (\text{Audi}, \text{BMW}, \text{Ford}, \text{Honda}, \text{Hyundai}, \text{Kia}, \text{Mercedes}, \text{Nissan},\\ \text{Toyota}, \text{Volkswagen}).
\end{table}


\begin{table}[ht]
\centering
\scriptsize
\caption{Convergence of the QUBO algorithm on the \texttt{freMPL} dataset.}
\label{tab:frempl_convergence}
\begin{tabular}{c c c l c c}
\toprule
Variable & Iteration & $\lambda_{\text{initial}}$ & Binary vector & Score & $\lambda_{\text{final}}$ \\ 
\midrule

\multirow{5}{*}{VehBody}
& 1 & 0 & (0,0,0,0,0,0,0,0,0) & $5.3836 \times 10^{8}$ & $1.7577\times 10^{12}$ \\
& 2 & $1.7577\times 10^{12}$ & (0,1,1,1,1,0,1,0,0) & $5.3481 \times 10^{8}$ & $1.7462\times 10^{12}$ \\
& 3 & $1.7462\times 10^{12}$ & (1,1,1,0,0,1,0,1,1) & $5.3376 \times 10^{8}$ & $1.7427\times 10^{12}$ \\
& 4 &$ 1.7427\times 10^{12}$ & (0,0,0,1,0,0,0,0,0) & $5.3280 \times 10^{8}$ & $1.7396\times 10^{12}$ \\
& 5 & $1.7396\times 10^{12}$ & $\mathbf{(1,1,1,0,1,1,1,1,1)}$ & $5.3280 \times 10^{8}$ & $1.7396\times 10^{12}$ \\ 
\midrule

\multirow{4}{*}{VehClass}
& 1 & 0 & (0,0,0,0,0,0) & $5.3836 \times 10^{8}$ & $1.7577\times 10^{12}$ \\
& 2 & $1.7577\times 10^{12}$ & (0,1,1,0,1,1) & $5.3574 \times 10^{8}$ & $1.7492\times 10^{12}$ \\
& 3 & $1.7492\times 10^{12}$ & (1,1,1,0,1,1) & $5.3549 \times 10^{8}$ & $1.7484\times 10^{12}$ \\
& 4 & $1.7484\times 10^{12}$ & $\mathbf{(1,1,1,0,1,1)}$ & $5.3549 \times 10^{8}$ & $1.7484\times 10^{12}$ \\ 
\bottomrule
\end{tabular}

\footnotesize{
\scriptsize
\textbf{Binary vector encoding:}  

\textit{VehBody :} $(q_1, q_2, ..., q_{10})$ = $(\text{bus, cabriolet, coupe, microvan, other microvan, }$ $\text{sedan, sport utility vehicle,  station wagon, van})$. 

\textit{VehClass :} $(q_1, q_2, ..., q_{6})$ =  $(\text{0, A, B, H, M1, M2})$.
}
\end{table}

These results show that the QUBO formulation exactly reproduces the optimal split obtained via exhaustive search and converges to the non-trivial optimal solution. The reinforced convergence criterion effectively prevents convergence to trivial partitions. Overall, this experiment confirms the robustness and reliability of the proposed iterative QUBO approach for categorical splitting. The QUBO method converges within only a few iterations. At the final iteration, multiple candidate solutions satisfy the tolerance criterion with nearly zero objective value. The reinforced convergence criterion successfully discards trivial solutions, thereby ensuring the identification of the correct non-trivial optimum. Moreover, the evolution of the penalty parameter $\lambda$ decreases monotonically until convergence.

\subsubsection{Predictive Performance Evaluation}

We compare the predictive performance of QUBO-based decision tree with classical CART implementations, namely the \textit{scikit-learn} (SK) library ({\url{https://scikit-learn.org/stable/}) and the RPART algorithm, on simulated and real-world insurance datasets (see Objective III). Predictive accuracy is evaluated using the \textbf{mean squared error (MSE)}, reported relative to the RPART baseline (\textbf{Rel.\ MSE}). Model performance is further assessed through a comparative analysis of pre-pruning and post-pruning strategies based on a test-sample evaluation.

\subsubsection*{Pre-pruning}
For this analysis, we control the tree complexity by fixing the tree depth or by imposing a minimum number of observations per leaf, thereby forcing the tree to stop growing at an earlier stage than the fully grown tree. In our experiments, the tree depth is set to $5$, although other depth values may also be considered. We then compare the prediction errors of the trees obtained using each method. The corresponding results are reported in Table~\ref{tab:depth5_mse_comparison}.

\begin{table}[ht!]
\centering
\scriptsize
\caption{Performance comparison for depth--5 decision trees}
\label{tab:depth5_mse_comparison}
\begin{tabular}{llccccc}
\toprule
\textbf{Dataset} & \textbf{Method} & \textbf{Leaves} &\textbf{ MSE} & \textbf{Rel. MSE (\%)} \\
\midrule

\multirow{3}{*}{df\_tricky\_means}
 & QUBO & 6 & $1.1492\times10^{7}$ & 0.000 \\
 & SK   & 6 & $1.1492\times10^{7}$ &  0.000 \\
 & RPART     & 6 & $1.1492\times10^{7}$  & 0.000 \\
\midrule

\multirow{3}{*}{datagen}
 & QUBO & 32 & $5.1245\times10^{6}$ &  0.000 \\
 & SK      & 32 & $5.1268\times10^{6}$ &  $+0.045$ \\
 & RPART     & 32 & $5.1245\times10^{6}$ &  0.000 \\
\midrule

\multirow{3}{*}{ames}
 & QUBO & 22 & $4.7992\times10^{-1}$ &  $-0.000$ \\
 & SK      & 23 & $4.8327\times10^{-1}$ &  $+0.698$ \\
 & RPART     & 22 & $4.7992\times10^{-1}$  & 0.000 \\
\midrule

\multirow{3}{*}{freMPL}
 & QUBO & 28 & $2.9797\times10^{8}$ & $+1.759$ \\
 & SK      & 26 & $2.9804\times10^{8}$ & $+1.783$ \\
 & RPART     & 29 & $2.9282\times10^{8}$ & 0.000 \\
\midrule

\multirow{3}{*}{ausprivauto}
 & QUBO & 31 & $1.4410\times10^{10}$  & 0.000 \\
 & SK      & 28 & $1.5491\times10^{10}$  & $+7.495$ \\
 & RPART     & 31 & $1.4410\times10^{10}$ & 0.000 \\
\bottomrule
\end{tabular}
\end{table}

We observe that the models have nearly the same number of leaves and 
produce identical predictions at this stage. This provides empirical 
evidence that the QUBO-based decision tree model is consistent with 
the classical CART formulation. Furthermore, the characteristics of the QUBO and RPART trees are highly similar, almost identical, which is expected since both models are constructed using the same approach.

\subsubsection*{Post-pruning: Test sample selection}
To refine our models, we employ the classical cost--complexity pruning method \citep{breiman1984cart, bradford1998pruning} to identify the optimal subtree that minimizes the prediction error. This process effectively manages the trade-off between model complexity and generalization. Specifically, we use a test sample selection approach, which involves partitioning the dataset into three distinct subsets: 50\% for training, 25\% for validation, and 25\% for testing. The procedure is structured as follows:
\begin{enumerate}
\setlength{\itemsep}{-0.25em}
    \item Construction of the maximal tree using the training set.
    \item Generation of a nested sequence of pruned subtrees along with their corresponding complexity parameters ($\alpha_m$).
    \item Selection of the optimal tree based on the validation set; we identify the subtree that minimizes the validation error within the generated sequence.
    \item Final evaluation, where the selected optimal tree is assessed using the independent test sample.
\end{enumerate}
This methodology requires a sufficiently large dataset to ensure statistical robustness. Given the substantial size of our data, we apply this approach to the \texttt{datagen} dataset ($50{,}000$ observations), as well as to the \texttt{df} and \texttt{freMPL} datasets. For the \texttt{freMPL}, the variable \textbf{VehBody} contains a limited number of observations in certain categories. Moreover, the variable \textbf{VehPrice} consists of 26 classes, some of which are sparsely represented. To mitigate issues related to low representativeness, these variables were initially excluded from the model and subsequently recoded in order to reintegrate them into the modeling framework. For each tree configuration (root model, optimal post-pruned model, and maximal tree), we report the errors observed across the three samples (training, validation, and test). In addition, we present the relative difference (\%) between the test error of each method and the baseline error obtained using the RPART method. The results are detailed in Tables~\ref{tab:df_test_error_comparison}, \ref{tab:datagen_test_error_comparison},
\ref{tab:freMPL_test_error_comparison_reduce_var} and
\ref{tab:freMPL_test_error_comparison_recod_var}.

\begin{table}[!ht]
\centering
\scriptsize
\caption{Comparison of prediction errors for dataset \texttt{df (20,000 observations)}.}
\label{tab:df_test_error_comparison}
\begin{tabular}{llcccc}
\toprule
\textbf{Tree type} & \textbf{Method} &
\makecell{\textbf{Leaves} \\ \textbf{(Depth)}} &
\makecell{\textbf{Train MSE} \\ \textbf{(Rel. MSE \%)}} &
\makecell{\textbf{Validation MSE} \\ \textbf{(Rel. MSE \%)}} &
\makecell{\textbf{Test MSE} \\ \textbf{(Rel. MSE \%)}} \\
\midrule
\multirow{3}{*}{Root Tree}
& RPART & \makecell{1 \\ (0)}
   & $1.8065\times10^{7}$
   & $1.8345\times10^{7}$
   & $1.7861\times10^{7}$ \\
 & QUBO  & \makecell{1 \\ (0)}
   & \makecell{$1.8065\times10^{7}$ \\ (0.00)}
   & \makecell{$1.8345\times10^{7}$ \\ (0.00)}
   & \makecell{$1.7861\times10^{7}$ \\ (0.00)} \\[6pt]
 & SK    & \makecell{1 \\ (0)}
   & \makecell{$1.8065\times10^{7}$ \\ (0.00)}
   & \makecell{$1.8345\times10^{7}$ \\ (0.00)}
   & \makecell{$1.7861\times10^{7}$ \\ (0.00)} \\[6pt]
 \midrule
\multirow{3}{*}{\makecell[l]{Validation \\ Best Tree}}
& RPART & \makecell{10 \\ (5)}
   & $1.2515\times10^{6}$
   & $1.1618\times10^{6}$
   & $1.2061\times10^{6}$ \\
 & QUBO  & \makecell{10 \\ (5)}
   & \makecell{$1.2515\times10^{6}$ \\ (0.00)}
   & \makecell{$1.1618\times10^{6}$ \\ (0.00)}
   & \makecell{$1.2061\times10^{6}$ \\ (0.00)} \\[6pt]
 & SK    & \makecell{17 \\ (7)}
   & \makecell{$1.2613\times10^{6}$ \\ (+0.78)}
   & \makecell{$1.2006\times10^{6}$ \\ (+3.34)}
   & \makecell{$1.2215\times10^{6}$ \\ (+1.28)} \\[6pt]
\midrule
\multirow{3}{*}{\makecell[l]{Test \\ Best Tree}}
& RPART & \makecell{17 \\ (7)}
   & $1.2323\times10^{6}$
   & $1.1790\times10^{6}$
   & $1.2021\times10^{6}$ \\
 & QUBO  & \makecell{17 \\ (7)}
   & \makecell{$1.2323\times10^{6}$ \\ (0.00)}
   & \makecell{$1.1790\times10^{6}$ \\ (0.00)}
   & \makecell{$1.2021\times10^{6}$ \\ (0.00)} \\[6pt]
 & SK    & \makecell{21 \\ (7)}
   & \makecell{$1.2423\times10^{6}$ \\ (+0.81)}
   & \makecell{$1.2044\times10^{6}$ \\ (+2.15)}
   & \makecell{$1.2100\times10^{6}$ \\ (+0.66)} \\[6pt]
\midrule
\multirow{3}{*}{Max Tree}
 & RPART & \makecell{3706 \\ (30)}
   & $2.1598\times10^{4}$
   & $2.4576\times10^{6}$
   & $2.3814\times10^{6}$ \\
 & QUBO  & \makecell{3833 \\ (39)}
   & \makecell{$1.8579\times10^{3}$ \\ ($-91.40$)}
   & \makecell{$2.4708\times10^{6}$ \\ (+0.54)}
   & \makecell{$2.4345\times10^{6}$ \\ (+2.23)} \\[6pt]
 & SK    & \makecell{3831 \\ (42)}
   & \makecell{$1.8579\times10^{3}$ \\ ($-91.40$)}
   & \makecell{$2.5434\times10^{6}$ \\ (+3.49)}
   & \makecell{$2.4626\times10^{6}$ \\ (+3.37)} \\[6pt]
\bottomrule
\end{tabular}
\end{table}


\begin{table}[!ht]
\centering
\scriptsize
\caption{Comparison of prediction errors for dataset \texttt{datagen (50,000 observations)}.}
\label{tab:datagen_test_error_comparison}
\begin{tabular}{llcccc}
\toprule
\textbf{Tree type} & \textbf{Method} &
\makecell{\textbf{Leaves} \\ \textbf{(Depth)}} &
\makecell{\textbf{Train MSE} \\ \textbf{(Rel. MSE \%)}} &
\makecell{\textbf{Validation MSE} \\ \textbf{(Rel. MSE \%)}} &
\makecell{\textbf{Test MSE} \\ \textbf{(Rel. MSE \%)}} \\
\midrule
\multirow{3}{*}{Root Tree}
& RPART & \makecell{1 \\ (0)}
   & $4.4229\times10^{7}$
   & $3.8042\times10^{7}$
   & $6.4666\times10^{7}$ \\
 & QUBO  & \makecell{1 \\ (0)}
   & \makecell{$4.4229\times10^{7}$ \\ (0.00)}
   & \makecell{$3.8042\times10^{7}$ \\ (0.00)}
   & \makecell{$6.4666\times10^{7}$ \\ (0.00)} \\[6pt]
 & SK    & \makecell{1 \\ (0)}
   & \makecell{$4.4229\times10^{7}$ \\ (0.00)}
   & \makecell{$3.8042\times10^{7}$ \\ (0.00)}
   & \makecell{$6.4666\times10^{7}$ \\ (0.00)} \\[6pt]
\midrule
\multirow{3}{*}{\makecell[l]{Validation \\ Best Tree}}
& RPART & \makecell{2 \\ (1)}
   & $4.0574\times10^{7}$
   & $3.4853\times10^{7}$
   & $6.1032\times10^{7}$ \\
 & QUBO  & \makecell{2 \\ (1)}
   & \makecell{$4.0574\times10^{7}$ \\ (0.00)}
   & \makecell{$3.4853\times10^{7}$ \\ (0.00)}
   & \makecell{$6.1032\times10^{7}$ \\ (0.00)} \\[6pt]
 & SK    & \makecell{2 \\ (1)}
   & \makecell{$4.0574\times10^{7}$ \\ (0.00)}
   & \makecell{$3.4853\times10^{7}$ \\ (0.00)}
   & \makecell{$6.1032\times10^{7}$ \\ (0.00)} \\[6pt]
\midrule
\multirow{3}{*}{\makecell[l]{Test \\ Best Tree}}
& RPART & \makecell{7 \\ (4)}
   & $3.4872\times10^{7}$
   & $4.3175\times10^{7}$
   & $6.0600\times10^{7}$ \\
 & QUBO  & \makecell{7 \\ (4)}
   & \makecell{$3.4872\times10^{7}$ \\ (0.00)}
   & \makecell{$4.3175\times10^{7}$ \\ (0.00)}
   & \makecell{$6.0600\times10^{7}$ \\ (0.00)} \\[6pt]
 & SK    & \makecell{47 \\ (14)}
   & \makecell{$9.0590\times10^{6}$ \\ ($-74.02$)}
   & \makecell{$8.1514\times10^{7}$ \\ (+88.80)}
   & \makecell{$6.0603\times10^{7}$ \\ (+0.005)} \\[6pt]
\midrule
\multirow{3}{*}{Max Tree}
& RPART & \makecell{5882 \\ (30)}
   & $1.9677\times10^{5}$
   & $7.4373\times10^{7}$
   & $7.5802\times10^{7}$ \\
 & QUBO  & \makecell{5887 \\ (31)}
   & \makecell{$1.9677\times10^{5}$ \\ (0.00)}
   & \makecell{$7.4393\times10^{7}$ \\ (+0.03)}
   & \makecell{$7.6084\times10^{7}$ \\ (+0.37)} \\[6pt]
 & SK    & \makecell{5885 \\ (40)}
   & \makecell{$1.9677\times10^{5}$ \\ (0.00)}
   & \makecell{$8.6185\times10^{7}$ \\ (+15.88)}
   & \makecell{$7.1146\times10^{7}$ \\ ($-6.14$)} \\[6pt]
\bottomrule
\end{tabular}
\end{table}

\begin{table}[!ht]
\centering
\scriptsize
\caption{Comparison of prediction errors for dataset \texttt{freMPL (reduced variables)}.}
\label{tab:freMPL_test_error_comparison_reduce_var}
\begin{tabular}{llcccc}
\toprule
\textbf{Tree type} & \textbf{Method} &
\makecell{\textbf{Leaves} \\ \textbf{(Depth)}} &
\makecell{\textbf{Train MSE} \\ \textbf{(Rel. MSE \%)}} &
\makecell{\textbf{Validation MSE} \\ \textbf{(Rel. MSE \%)}} &
\makecell{\textbf{Test MSE} \\ \textbf{(Rel. MSE \%)}} \\
\midrule
\multirow{3}{*}{Root Tree}
& RPART & \makecell{1 \\ (0)}
   & $5.8849\times10^{8}$
   & $3.1646\times10^{8}$
   & $3.0866\times10^{8}$ \\
 & QUBO  & \makecell{1 \\ (0)}
   & \makecell{$5.8849\times10^{8}$ \\ (0.00)}
   & \makecell{$3.1646\times10^{8}$ \\ (0.00)}
   & \makecell{$3.0866\times10^{8}$ \\ (0.00)} \\[6pt]
 & SK    & \makecell{1 \\ (0)}
   & \makecell{$5.8849\times10^{8}$ \\ (0.00)}
   & \makecell{$3.1646\times10^{8}$ \\ (0.00)}
   & \makecell{$3.0866\times10^{8}$ \\ (0.00)} \\[6pt]
\midrule
\multirow{3}{*}{Best Tree}
& RPART & \makecell{16 \\ (6)}
   & $3.1813\times10^{8}$
   & $2.2807\times10^{8}$
   & $2.9788\times10^{8}$ \\
 & QUBO  & \makecell{16 \\ (6)}
   & \makecell{$3.1813\times10^{8}$ \\ (0.00)}
   & \makecell{$2.2807\times10^{8}$ \\ (0.00)}
   & \makecell{$2.9788\times10^{8}$ \\ (0.00)} \\[6pt]
 & SK    & \makecell{16 \\ (6)}
   & \makecell{$3.1816\times10^{8}$ \\ (+0.01)}
   & \makecell{$2.2803\times10^{8}$ \\ ($-0.02$)}
   & \makecell{$2.7434\times10^{8}$ \\ ($-7.90$)} \\[6pt]
\midrule
\multirow{3}{*}{Max Tree}
& RPART & \makecell{1792 \\ (23)}
   & $1.1074\times10^{8}$
   & $4.4637\times10^{8}$
   & $3.3955\times10^{8}$ \\
 & QUBO  & \makecell{1792 \\ (23)}
   & \makecell{$1.1074\times10^{8}$ \\ (0.00)}
   & \makecell{$4.4591\times10^{8}$ \\ ($-0.10$)}
   & \makecell{$3.3951\times10^{8}$ \\ ($-0.01$)} \\[6pt]
 & SK    & \makecell{1792 \\ (25)}
   & \makecell{$1.1074\times10^{8}$ \\ (0.00)}
   & \makecell{$4.9962\times10^{8}$ \\ (+11.93)}
   & \makecell{$3.4507\times10^{8}$ \\ (+1.63)} \\[6pt]
\bottomrule
\end{tabular}
\end{table}

\begin{table}[!ht]
\centering
\scriptsize
\caption{Comparison of prediction errors for dataset \texttt{freMPL (recoded variables)}.}
\label{tab:freMPL_test_error_comparison_recod_var}
\begin{tabular}{llcccc}
\toprule
\textbf{Tree type} & \textbf{Method} &
\makecell{\textbf{Leaves} \\ \textbf{(Depth)}} &
\makecell{\textbf{Train MSE} \\ \textbf{(Rel. MSE \%)}} &
\makecell{\textbf{Validation MSE} \\ \textbf{(Rel. MSE \%)}} &
\makecell{\textbf{Test MSE} \\ \textbf{(Rel. MSE \%)}} \\
\midrule
\multirow{2}{*}{Root Tree}
& RPART & \makecell{1 \\ (0)}
   & $5.8849\times10^{8}$
   & $3.1646\times10^{8}$
   & $3.0866\times10^{8}$ \\
 & QUBO  & \makecell{1 \\ (0)}
   & \makecell{$5.8849\times10^{8}$ \\ (0.00)}
   & \makecell{$3.1646\times10^{8}$ \\ (0.00)}
   & \makecell{$3.0866\times10^{8}$ \\ (0.00)} \\[6pt]
\midrule
\multirow{2}{*}{\makecell[l]{Validation \\ Best Tree}}
& RPART & \makecell{56 \\ (14)}
   & $1.0241\times10^{8}$
   & $1.8142\times10^{8}$
   & $2.7794\times10^{8}$ \\
 & QUBO  & \makecell{58 \\ (14)}
   & \makecell{$1.0007\times10^{8}$ \\ ($-2.28$)}
   & \makecell{$1.7860\times10^{8}$ \\ ($-1.55$)}
   & \makecell{$2.7266\times10^{8}$ \\ ($-1.90$)} \\[6pt]
\midrule
\multirow{2}{*}{\makecell[l]{Test \\ Best Tree}}
& RPART & \makecell{8 \\ (6)}
   & $3.2352\times10^{8}$
   & $3.1549\times10^{8}$
   & $1.8356\times10^{8}$ \\
 & QUBO  & \makecell{8 \\ (6)}
   & \makecell{$3.2352\times10^{8}$ \\ (0.00)}
   & \makecell{$3.1549\times10^{8}$ \\ (0.00)}
   & \makecell{$1.8356\times10^{8}$ \\ (0.00)} \\[6pt]
\midrule
\multirow{2}{*}{Max Tree}
& RPART & \makecell{2088 \\ (30)}
   & $3.0626\times10^{7}$
   & $2.1638\times10^{8}$
   & $2.9670\times10^{8}$ \\
 & QUBO  & \makecell{2102 \\ (33)}
   & \makecell{$3.0616\times10^{7}$ \\ ($-0.03$)}
   & \makecell{$2.0934\times10^{8}$ \\ ($-3.25$)}
   & \makecell{$2.8753\times10^{8}$ \\ ($-3.09$)} \\[6pt]
\bottomrule
\end{tabular}
\end{table}

From the analysis of the different tables, it appears that the three models generally produce similar results. More specifically, the optimal trees yield nearly identical predictions across all models and datasets (training, validation, and test), with only minor differences occasionally observed for the \texttt{scikit-learn} (SK) model. Regarding the QUBO and RPART models, the results of the optimal trees are identical across all datasets, and in some cases, the QUBO optimal trees exhibit a slight performance improvement compared to those of RPART.

For the maximal trees, the models produce identical results on the training data but exhibit slight divergences on the validation and test sets, particularly between QUBO and RPART, despite the expectation of similar characteristics. This difference is primarily explained by the fact that the maximal trees do not have the same number of leaves and therefore differ in depth. Indeed, the RPART algorithm is constrained to a maximum depth of 30, whereas the tree based on the QUBO formulation can achieve greater depth and leverage additional information from the data. This explains the occasionally superior performance of the QUBO-based model on the training data. However, by learning the specific characteristics of the training dataset more extensively, the model may exhibit reduced performance on the validation and test sets, a classical manifestation of overfitting. It should also be noted that even when the maximal QUBO and RPART trees are structurally identical, their predictions may differ on the validation and test sets depending on dataset characteristics. This second explanation is further discussed in the following subsection.

\subsubsection{Analysis of QUBO and RPART Maximal Trees Differences\\}

Consider the \texttt{datagen} dataset composed of $10{,}000$ observations. The comparative results of the different models on the training, validation, and test sets are reported in Table~\ref{tab_datagen_10000}.

\begin{table}[ht!]
\centering
\scriptsize
\caption{Comparison of prediction errors for \texttt{datagen ($10{,}000$ observations)} dataset.}
\label{tab_datagen_10000}
\begin{tabular}{llccccc}
\toprule
\textbf{Type} & \textbf{Method} & \textbf{Leaves} &
\makecell{\textbf{Train MSE} \\ \textbf{(Rel. MSE \%)}} &
\makecell{\textbf{Validation MSE} \\ \textbf{(Rel. MSE \%)}} &
\makecell{\textbf{Test MSE} \\ \textbf{(Rel. MSE \%)}} \\
\midrule
Root Tree
 & -- & 1
   & $4.6149\times10^{7}$
   & $3.3576\times10^{7}$
   & $1.1102\times10^{7}$ \\
\midrule
\multirow{2}{*}{Valid Best Tree}
 & RPART & 11
   & $5.0727\times10^{6}$
   & $2.9635\times10^{7}$
   & $8.6766\times10^{6}$ \\[6pt]
 & QUBO  & 11
   & \makecell{$5.0727\times10^{6}$ \\ (0.00)}
   & \makecell{$2.9635\times10^{7}$ \\ (0.00)}
   & \makecell{$8.6766\times10^{6}$ \\ (0.00)} \\
\midrule
\multirow{2}{*}{Test Best Tree}
 & RPART & 16
   & $3.3531\times10^{6}$
   & $3.4122\times10^{7}$
   & $8.5298\times10^{6}$ \\[6pt]
 & QUBO  & 16
   & \makecell{$3.3531\times10^{6}$ \\ (0.00)}
   & \makecell{$3.4122\times10^{7}$ \\ (0.00)}
   & \makecell{$8.5298\times10^{6}$ \\ (0.00)} \\
\midrule
\multirow{2}{*}{Max Tree}
 & RPART & 1168
   & $4.3815\times10^{-2}$
   & $3.7208\times10^{7}$
   & $1.1309\times10^{7}$ \\[6pt]
 & QUBO  & 1168
   & \makecell{$4.3815\times10^{-2}$ \\ (0.00)}
   & \makecell{$3.9055\times10^{7}$ \\ (+4.96)}
   & \makecell{$1.2193\times10^{7}$ \\ (+7.82)} \\
\bottomrule
\end{tabular}
\end{table}

We observe that the two models have the same number of leaves and achieve identical performance on the training data, while differences appear on the validation and test sets. To better understand the origin of these discrepancies, we restricted the maximum tree depth to 5 and 6. The corresponding results are presented in Table~\ref{tab:mse_depth5-6}.

\begin{table}[ht!]
\centering
\scriptsize
\caption{Comparison of regression trees (depth 5-6) for \texttt{datagen ($10{,}000$ observations)} dataset.}
\label{tab:mse_depth5-6}
\begin{tabular}{lcccccc}
\toprule
\textbf{Method} & \textbf{Leaves} & \textbf{Depth} &
\makecell{\textbf{Train MSE} \\ \textbf{(Rel. MSE \%)}} &
\makecell{\textbf{Validation MSE} \\ \textbf{(Rel. MSE \%)}} &
\makecell{\textbf{Test MSE} \\ \textbf{(Rel. MSE \%)}} \\
\midrule

RPART & 17 & 5
  & $1.0435927\times10^{7}$
  & $3.1087463\times10^{7}$
  & $9.801719\times10^{6}$ \\[6pt]
QUBO  & 17 & 5
  & \makecell{$1.0435927\times10^{7}$ \\ (0.00)}
  & \makecell{$3.1087463\times10^{7}$ \\ (0.00)}
  & \makecell{$9.801719\times10^{6}$ \\ (0.00)} \\
\midrule

RPART & 32 & 6
  & $6.853276\times10^{6}$
  & $3.2941964\times10^{7}$
  & $1.1184164\times10^{7}$ \\[6pt]
QUBO  & 32 & 6
  & \makecell{$6.853276\times10^{6}$ \\ (0.00)}
  & \makecell{$3.6037032\times10^{7}$ \\ (+9.40)}
  & \makecell{$1.2624976\times10^{7}$ \\ (+12.88)} \\
\bottomrule
\end{tabular}
\end{table}

Table~\ref{tab:mse_depth5-6} shows that the trees with depth $5$ produced by both models are structurally identical and yield strictly equivalent performance. We then considered trees with depth of $6$, whose results are also reported in Table~\ref{tab:mse_depth5-6}. A difference in predictions is observed on the validation and test datasets. However, inspection of the trees reveals that they share the same leaves as well as identical partitioning rules at the internal nodes. We therefore identified the observations for which the two models produce different predictions; these are reported in Table~\ref{tab:prediction_diff}.

\begin{table}[ht!]
\centering
\scriptsize
\caption{Prediction differences of QUBO and RPART in validation and test datasets.}
\label{tab:prediction_diff}
\begin{tabular}{ccccccccccc}
\toprule
Dataset & ID & Brand & Color & Mileage (km) & HasClaim & ClaimAmount &
\multicolumn{2}{c}{Max Tree} \\
\cline{8-9}
 &  &  &  &  &  &  &
CART & QUBO \\
\midrule
\multirow{2}{*}{Validation}
 & 7106 & Ford & Black & 82228 & 1 & 4939 & 3584.67 & 60488 \\
 & 8323 & Ford & Black & 77406 & 1 & 4742 & 3584.67 & 60488 \\
\midrule
Test
 & 5160 & BMW & Black & 61622 & 1 & 6716 & 3584.67 & 60488 \\
\bottomrule
\end{tabular}
\end{table}

Since predictions are determined by the tree structure, we extracted the branch--identical for both models—that leads to the predictions of these three observations. This branch is illustrated in Figure~\ref{fig:cart_tree}.

\begin{figure}[ht!]
\centering
\scriptsize
\caption{Branch of the depth-6 regression tree used for prediction}
\label{fig:cart_tree}
\begin{tikzpicture}[
  level distance=1.4cm,
  level 1/.style={sibling distance=6.5cm},
  level 2/.style={sibling distance=6.2cm},
  level 3/.style={sibling distance=6cm},
  level 4/.style={sibling distance=5.4cm},
  every node/.style={
    draw,
    rectangle,
    rounded corners,
    align=center,
    top color=white,
    bottom color=gray!10
  }
]

\node {Root\\
$n=6000$\\
$\hat{y}=933.81$}
child {
  node {HasClaim $<0.5$\\
  $n=4832$\\
  $\hat{y}=0$\\
  \textbf{Leaf}}
}
child {
  node {HasClaim $\ge 0.5$\\
  $n=1168$\\
  $\hat{y}=4796.97$}
  child {
    node {$Brand \in \mathcal{B}_1$\\
    $n=1047$\\
    $\hat{y}=3959.48$}
    child {
      node {$Brand \in \mathcal{B}_2$\\
      $n=673$\\
      $\hat{y}=3030.18$\\
      }
    }
    child {
      node {$Brand \in \mathcal{B}_3$\\
      $n=374$\\
      $\hat{y}=5631.74$}
      child {
        node {$Brand \in \mathcal{B}_4$\\
        $n=255$\\
        $\hat{y}=4952.35$}
        child {
          node {$Mileage < 60234$\\
          $n=248$\\
          $\hat{y}=4761.50$\\
          }
        }
        child {
          node {$Mileage \ge 60234$\\
          $n=7$}
          child {
            node {$Color \in \mathcal{C}_1$\\
            $n=6$\\
            $\hat{y}=3584.67$\\
            \textbf{Leaf}}
          }
          child {
            node {$Color = Blue$\\
            $n=1$\\
            $\hat{y}=60488$\\
            \textbf{Leaf}}
          }
        }
      }
      child {
        node {$Brand = Audi$\\
        $n=119$\\
        $\hat{y}=7087.57$\\
        }
      }
    }
  }
  child {
    node {$Brand = Mercedes$\\
    $n=121$\\
    $\hat{y}=12043.61$\\
    }
  }
};
\end{tikzpicture}
\scriptsize
\begin{align*}
  \mathcal{B}_1 &= \{\text{Audi, BMW, Ford, Honda, Hyundai, Kia, Nissan, Toyota, Volkswagen}\} \\
  \mathcal{B}_2 &= \{\text{Honda, Hyundai, Kia, Nissan, Toyota, Volkswagen}\}\\
  \mathcal{B}_3 &= \{\text{Audi, BMW, Ford}\}; \quad
  \mathcal{B}_4 = \{\text{BMW, Ford}\}; \quad
  \mathcal{C}_1 = \{\text{Gray, Green, Red}\}
\end{align*}


\end{figure}

From this branch of the tree with depth $6$, the difference in predictions can be explained by the fact that certain observations are not assigned to either of the two expected leaves. Specifically, the category \textit{Black} does not belong to either of the two subsets defined by the splitting conditions ($\text{Color} \in \mathcal{C}_1$ and $\text{Color} = \text{Blue}$). In the QUBO tree, the splitting rule is formulated as $\text{Color} \in \mathcal{C}_1$. When this condition is not satisfied, the observation is automatically directed to the second node. In contrast, in the RPART tree, the splitting rules are interpreted as $\text{Color} \in \mathcal{C}_1 \quad \text{or} \quad \text{Color} = \text{Blue}$. When both conditions are not satisfied, the RPART algorithm assigns the observation to the node containing the largest number of training observations. This difference in the assignment rule explains the discrepancies observed in the validation and test predictions, even though the tree structures are identical.

The discrepancy therefore does not arise from the structure of the two trees, but rather from the structure of the data, in particular the sample size. Indeed, when the sample size increases from $10{,}000$ to $50{,}000$ observations, the two trees with depth $6$ produce identical results, as illustrated in Table~\ref{tab:mse_comparison_2}.


\begin{table}[ht!]
\centering
\scriptsize
\caption{Performance comparison of regression trees (depth 6) for \texttt{datagen ($50{,}000$ observations)} dataset.}
\label{tab:mse_comparison_2}
\begin{tabular}{lccccc}
\toprule
\textbf{Method} & \textbf{Leaves} & \textbf{Depth} &
\makecell{\textbf{Train MSE} \\ \textbf{(Rel. MSE \%)}} &
\makecell{\textbf{Validation MSE} \\ \textbf{(Rel. MSE \%)}} &
\makecell{\textbf{Test MSE} \\ \textbf{(Rel. MSE \%)}} \\
\midrule
RPART & 27 & 6
  & $3.2795641\times10^{7}$
  & $4.5282717\times10^{7}$
  & $6.0606668\times10^{7}$ \\[6pt]
QUBO  & 27 & 6
  & \makecell{$3.2795641\times10^{7}$ \\ (0.00)}
  & \makecell{$4.5282717\times10^{7}$ \\ (0.00)}
  & \makecell{$6.0606668\times10^{7}$ \\ (0.00)} \\
\bottomrule
\end{tabular}
\end{table}


Finally, the comparison of maximal tree performance across the two sample sizes shows a decrease in the relative MSE on the validation and test datasets when the sample size increases, as illustrated in Table~\ref{tab:mse_max_trees}.


\begin{table}[ht!]
\centering
\scriptsize
\caption{Performance comparison of QUBO and RPART maximal trees for different sample size.}
\label{tab:mse_max_trees}
\begin{tabular}{lccccc}
\toprule
\textbf{Sample Size} & \textbf{Method} &
\makecell{\textbf{Leaves} \\ \textbf{(Depth)}} &
\makecell{\textbf{Train MSE} \\ \textbf{(Rel. MSE \%)}} &
\makecell{\textbf{Validation MSE} \\ \textbf{(Rel. MSE \%)}} &
\makecell{\textbf{Test MSE} \\ \textbf{(Rel. MSE \%)}} \\
\midrule
\multirow{2}{*}{10,000}
& RPART & \makecell{1168 \\ (26)}
  & $4.3815\times10^{-2}$
  & $3.7208\times10^{7}$
  & $1.1309\times10^{7}$ \\[6pt]
& QUBO  & \makecell{1168 \\ (26)}
  & \makecell{$4.3815\times10^{-2}$ \\ (0.00)}
  & \makecell{$3.9055\times10^{7}$ \\ (+4.96)}
  & \makecell{$1.2193\times10^{7}$ \\ (+7.82)} \\
\midrule
\multirow{2}{*}{50,000}
& RPART & \makecell{5882 \\ (30)}
  & $1.96771\times10^{5}$
  & $7.4373\times10^{7}$
  & $7.5802\times10^{7}$ \\[6pt]
& QUBO  & \makecell{5887 \\ (31)}
  & \makecell{$1.96769\times10^{5}$ \\ ($-0.001$)}
  & \makecell{$7.4393\times10^{7}$ \\ (+0.03)}
  & \makecell{$7.6084\times10^{7}$ \\ (+0.37)} \\
\bottomrule
\end{tabular}
\end{table}


These results indicate that increasing the sample size mitigates the effects related to data structure and leads to more consistent predictive performance between the two models. Furthermore, we conclude that the QUBO-based tree and the RPART tree exhibit identical structures and produce the same predictions for large-sample datasets.

\subsubsection{Discussion\\}

We begin this discussion by presenting a comparative table summarizing the performance of the optimal trees evaluated on the test data. The results are reported in Table~\ref{tab:cross_dataset_summary}.

\begin{table}[ht]
\scriptsize
\centering
\caption{Cross-dataset comparison of Test and Relative MSE (RPART Baseline).}
\label{tab:cross_dataset_summary}
\begin{tabular}{llcccc}
\toprule
\textbf{Dataset} & \textbf{Method} & \textbf{Leaves} &
\textbf{Test MSE} &
\textbf{Rel. MSE (\%)} \\
\midrule

\multirow{3}{*}{\texttt{df}}
 & QUBO  & 17 & $1.2021\times10^{6}$  & 0.00 \\
 & SK    & 21 & $1.2100\times10^{6}$  & +0.66 \\
 & RPART & 17 & $1.2021\times10^{6}$  & 0.00 \\
\midrule

\multirow{3}{*}{\texttt{datagen}}
 & QUBO  & 7 & $6.0600\times10^{7}$  & 0.00 \\
 & SK    & 47 & $6.0603\times10^{7}$  & +0.01 \\
 & RPART & 7 & $6.0600\times10^{7}$  & 0.00 \\
\midrule

\multirow{3}{*}{\texttt{freMPL}}
 & QUBO  & 16 & $2.9788\times10^{8}$  & 0.00 \\
 & SK    & 16 & $2.7434\times10^{8}$  & -7.8 \\
 & RPART & 16 & $2.9788\times10^{8}$  & 0.00 \\
\bottomrule
\end{tabular}
\end{table}

Table~\ref{tab:cross_dataset_summary} highlights the behavior of the QUBO-based decision tree across datasets of varying sizes and levels of complexity. On the simulated dataset \texttt{df}, the QUBO method exactly matches the performance of RPART and slightly outperforms the standard Python CART implementation, demonstrating its ability to recover optimal splits in a controlled setting. On the generated insurance dataset \texttt{datagen}, all three methods exhibit nearly identical predictive performance, with negligible relative differences in test MSE. This suggests that, for moderately complex and well-structured data, the QUBO formulation remains fully competitive with classical greedy approaches.
For the real-world dataset \texttt{freMPL}, the SK tree achieves a lower test error than both QUBO and RPART, with an improvement of approximately 8\%. QUBO and RPART display identical performance, indicating that the QUBO optimization strategy maintains strong generalization capabilities in the presence of noisy and heterogeneous insurance data.

Several key insights emerge from these results. First, the QUBO formulation of the split optimization problem enables solutions that are close to the exact optimum while producing clear and interpretable group separations. This opens promising avenues for applying our QUBO formulation to various fractional optimization problems, particularly in contexts where the search for globally optimal solutions is a central challenge. Second, in terms of overall predictive accuracy, QUBO-based trees deliver performance comparable to classical CART implementations without loss of interpretability. Although the predictive gain remains modest, the approach is applicable to multivariate regression and large-scale datasets using modern solvers, making it suitable for applications such as individual actuarial reserve modeling. A major advantage of this methodology is its compatibility with quantum computing. The direct mapping between QUBO problems and Ising Hamiltonians makes QUBO formulations naturally suitable for quantum hardware. As the QUBO format is natively supported by quantum annealing and gate-based solvers, our approach offers a direct path toward potentially quantum-accelerated optimal CART extensions. Overall, these results highlight the practical relevance of QUBO formulations as a robust alternative to variance-reduction based tree construction, establishing a bridge between machine learning, combinatorial optimization, quantum computing, actuarial science, and finance.

\section{Conclusion and Future Work}
The proposed QUBO formulation exactly solves the categorical split 
optimization problem in single-target regression trees, recovering 
the same optimal partitions as exhaustive search while maintaining 
the predictive accuracy of classical CART implementations across all 
datasets considered. By recasting the node-splitting criterion as a QUBO problem, we establish a connection between CART-based tree induction and combinatorial optimization. Our results show that the Gurobi solver efficiently identifies the globally optimal categorical split at each node, and that the iterative algorithm converges rapidly to the optimal solution on both simulated and real-world datasets. This confirms the feasibility and robustness of the approach across datasets with varying cardinality and structure.

Beyond exact node-level optimization, this work opens natural pathways 
toward globally optimized regression trees --- jointly optimizing splits 
across all nodes --- and toward multivariate extensions encompassing 
oblique splits and multi-target responses, where classical heuristics yields suboptimal solutions and no exact polynomial-time alternative is known. The proposed QUBO framework thus bridges statistical learning and combinatorial optimization, with potential extensions to quantum annealing solvers that could further enhance scalability for large-scale categorical predictors. Promising directions for future work include the design of hybrid classical-quantum algorithms that integrate QUBO-based node splitting into the CART framework, potentially reducing computation time while preserving predictive performance. Such developments could broaden the applicability of optimal regression trees to large-scale datasets across a wide range of domains.

\appendix

\section*{Appendix}}


\section{QUBO Formulations}
\label{Appendix: QUBO}

\subsection{Basic QUBO Formulation}

A QUBO problem consists of determining the assignment of binary variables 
$x_1,\ldots,x_n$ that minimizes a quadratic objective function. More formally, we have
\begin{equation*}
\min_{\mathbf{x} \in \{0,1\}^n} 
f(\mathbf{x}) = \sum_{i=1}^{n} a_i x_i + \sum_{i<j} b_{ij} x_i x_j,
\end{equation*}
where $x_i \in \{0,1\}$, are binary decision variables, $\mathbf{x}^\top = (x_1, x_2, \cdots x_n)$ and $a_i, b_{ij} \in \mathbb{R}$ are real coefficients. Since for binary variables $x_i \in \{0,1\}$ the identity $x_i^2 = x_i$ holds, the QUBO problem can equivalently be written in matrix form as
\begin{equation*}
\min_{\mathbf{x} \in \{0,1\}^n} f(\mathbf{x}) = \mathbf{x}^\top Q \mathbf{x},
\end{equation*}
where $Q \in \mathbb{R}^{n \times n}$ is a matrix containing both the 
linear and quadratic coefficients of the objective function. Without loss of generality, the matrix $Q$ may be assumed to be symmetric, since
\begin{equation*}
\mathbf{x}^\top Q \mathbf{x} = \mathbf{x}^\top \left(\frac{Q + Q^\top}{2}\right) \mathbf{x}.
\end{equation*}

Symmetrization can therefore be achieved by replacing each off-diagonal entry with
\begin{equation*}
q_{ij} = \frac{q_{ij} + q_{ji}}{2}.
\end{equation*}

Alternatively, a strictly upper-triangular representation may be used, 
in which each interaction term is counted only once:
\begin{equation*}
q_{ij} =
\begin{cases}
q_{ij} + q_{ji}, & \text{if } j > i, \\
0, & \text{if } j < i.
\end{cases}
\end{equation*}

\subsection{General QUBO Formulation}

In its standard form, a QUBO problem involves only binary decision variables. However, many constrained binary optimization problems can be reformulated as QUBO models using penalty methods. Consider the general binary optimization problem


\begin{align}
\min \;  f(\mathbf{x}) = \mathbf{x}^\top C \mathbf{x} \nonumber\\
\text{s.t. } 
\begin{cases}
A\mathbf{x} = b, \\
l \le B\mathbf{x} \le u,
\end{cases}
\label{ref_Q}
\end{align}


where $\mathbf{x} \in \{0,1\}^n$, $A$ and $B$ are given matrices, and $l$, $b$, and $u$ are constant vectors defining equality and inequality bounds.

Problem \eqref{ref_Q} can be transformed into a QUBO formulation by introducing quadratic penalty terms for the equality constraints. Inequalities can be converted into equalities by introducing slack variables, allowing them to be encoded in quadratic penalty terms as well \citep{glover2022quantum, tamura2021performance, bontekoe2023translating}. In particular, an equality constraint $Ax = b$ can be penalized as
\begin{equation*}
   P_{\text{eq}}(\lambda , \mathbf{x}) = \lambda \|A\mathbf{x} - b\|^2, 
\end{equation*}
where $\lambda > 0$ is a penalty parameter. By embedding constraint violations into the objective function, one obtains the penalized QUBO formulation
\begin{equation*}
   f(\mathbf{x}) + P_{\text{eq}}(\mathbf{x},\lambda) = \mathbf{x}^\top Q \mathbf{x} + c,
\end{equation*}
where $Q = C + D$ results from the expansion of the penalty term, and where $c$ is a constant independent of $\mathbf{x}$. For further details on encoding inequality constraints with slack variables in a quadratic form, see \citep{tamura2021performance, bontekoe2023translating, Xavier2026QuioQubo}.


\section{Proof of Theorems and Lemmas}
\label{app:proofLemma}
\subsection{Proof of Lemma~\ref{lemma:variance_identity}}

\begin{proof}{}
Expanding the double sum on the right-hand side of \eqref{eq:variance_identity} gives
\begin{align*}
\sum_{i=1}^{N}\sum_{j=1}^{N}(Y_i - Y_j)^2
&= N\sum_{i=1}^N Y_i^2 - 2\left(\sum_{i=1}^N Y_i\right)^2 + N\sum_{j=1}^N Y_j^2 \\
&= 2N\sum_{i=1}^N Y_i^2 - 2N^2\bar{Y}^2.
\end{align*}


Then


\begin{equation*}
    \frac{1}{2N^2}\sum_{i=1}^{N}\sum_{j=1}^{N}(Y_i-Y_j)^2
= \frac{1}{N}\sum_{i=1}^N Y_i^2 - \bar{Y}^2 = \sigma^2.
\end{equation*}
\end{proof}


\subsection{Proof of Lemma~\ref{Lemma:binary-sum-var}}
\begin{proof}{}
By defining the categories of the explanatory variable $X$ as classes and using Lemma~\ref{lemma:variance_identity}, we obtain for the subset $S_L$ associated to left child node $\mathcal{N}_{t_L}$:
\vspace{-0.5\baselineskip}
\begin{align*}
 N_L \operatorname{Var}_L & = \dfrac{1}{2 N_L} \sum_{C_{\alpha} \in S_L} \sum_{C_{\beta} \in S_L} \sum_{i \in C_{\alpha}} \sum_{j \in C_{\beta}} (Y_i - Y_j)^2 \nonumber \\
 &= \dfrac{1}{N_L} \sum_{C_{\alpha} \in S_L} \sum_{C_{\beta} \in S_L} V_{\alpha, \beta},
\end{align*}


where $C_{\alpha}, C_{\beta}$ are classes with $\alpha, \beta \in \{1, \cdots, M\}$ and the values $V_{\alpha, \beta}$ are defined by
\begin{equation*}
\label{eq_de_V}
    V_{\alpha, \beta} = \dfrac{1}{2} \sum_{i \in C_{\alpha}} \sum_{j \in C_{\beta}} (Y_i - Y_j)^2.
\end{equation*}


Then, we remark that $V_{\alpha, \beta} = V_{\beta, \alpha}$.
By introducing binary variables, the double sum over the elements of the left subset $S_L$ and right subset $S_R$ can be reformulated as a sum over all classes, so that


\begin{align*}
 N_L \operatorname{Var}_L &= \dfrac{1}{N_L} \sum_{\alpha, \beta} q_{\alpha} q_{\beta} V_{\alpha, \beta},\qquad
  N_R \operatorname{Var}_R = \dfrac{1}{N_R} \sum_{\alpha, \beta} (1- q_{\alpha})(1-q_{\beta}) V_{\alpha, \beta}.
\end{align*}


Moreover, we have


\begin{align*}
      N_L &= \sum\limits_{C_{\gamma} \in S_L} \sum\limits_{i \in C_{\gamma}} 1 = \sum\limits_{\gamma} q_{\gamma} N_{\gamma}, \qquad
      N_R = \sum\limits_{C_{\gamma} \in S_R} \sum\limits_{i \in C_{\gamma}} 1 = \sum\limits_{\gamma} (1-q_{\gamma}) N_{\gamma},
\end{align*}


where $\gamma \in \{1,\cdots,M\}$ and $N_{\gamma}$ denotes the number of observations in class $C_{\gamma}$. 
\end{proof}

\subsection{Proof of Lemma~\ref{lemma:fractional_decomposition}}
\begin{proof}{}
According to Lemma~\ref{Lemma:binary-sum-var} and recalling that $N_S$ denotes the number of observations in parent node $\mathcal{N}_t$, $\mathcal{R}(\mathbf{q})$ can be expressed as
\begingroup
\allowdisplaybreaks
\begin{align*}
  \mathcal{R}(\mathbf{q})  
  &= \dfrac{1}{N_L} \sum_{\alpha, \beta} q_{\alpha} q_{\beta} V_{\alpha, \beta} + \dfrac{1}{N_R} \sum_{\alpha, \beta} (1- q_{\alpha})(1-q_{\beta}) V_{\alpha, \beta}\\
    &= 
    \dfrac{1}{N_R N_L} 
    \left\{ 
    N_R \sum _{\alpha, \beta} q_{\alpha} q_{\beta} V_{\alpha, \beta} 
     + 
     N_L \sum _{\alpha, \beta} (1- q_{\alpha})(1-q_{\beta}) V_{\alpha, \beta}   \right\} \\ 
    &= 
    \dfrac{1}{N_R N_L} 
    \left\{ 
    \sum _{\gamma} (1-q_{\gamma}) N_{\gamma} \sum _{\alpha, \beta} q_{\alpha} q_{\beta} V_{\alpha, \beta} 
    +
    \sum _{\gamma} q_{\gamma} N_{\gamma} \sum _{\alpha, \beta} (1- q_{\alpha})(1-q_{\beta}) V_{\alpha, \beta} 
    \right\} \\
     &= 
    \dfrac{1}{N_R N_L} 
    \left\{
    \left(\sum _{\gamma} N_{\gamma}\right) \sum _{\alpha, \beta} q_{\alpha} q_{\beta} V_{\alpha, \beta} 
    + 
    \sum _{\gamma} q_{\gamma} N_{\gamma} \sum _{\alpha, \beta} (1-q_{\alpha} - q_{\beta}) V_{\alpha, \beta} 
    \right\} \\ 
    &= 
    \dfrac{1}{N_R N_L} \Bigg\{ 
    N_S \sum _{\alpha, \beta} q_{\alpha} q_{\beta} V_{\alpha, \beta} 
    + 
    \sum _{\gamma} N_{\gamma} q_{\gamma} \sum _{\alpha, \beta} V_{\alpha, \beta} \\ 
    & \qquad \qquad \quad - 
    \sum _{\gamma} \left( \sum _{\alpha, \beta} q_{\alpha} V_{\alpha, \beta}\right) N_{\gamma} q_{\gamma}  
    - 
    \sum _{\gamma} \left(\sum _{\alpha, \beta} q_{\beta} V_{\alpha, \beta} \right) N_{\gamma} q_{\gamma}  
    \Bigg\}\\ 
    &= 
     \dfrac{1}{N_R N_L} 
     \Bigg\{
     N_S \sum _{\alpha, \beta} q_{\alpha} q_{\beta} V_{\alpha, \beta}
    -
    \sum _{\alpha, \beta}q_{\alpha} q_{\beta} N_{\beta} \sum _{\gamma}  V_{\alpha, \gamma} \\
     & \qquad \qquad \quad -
     \sum _{\alpha, \beta} q_{\alpha} q_{\beta} N_{\alpha} \sum _{\gamma} V_{\gamma, \beta}
     \; +
    \sum _{\gamma} q_{\gamma} N_{\gamma} \sum _{\alpha, \beta}  V_{\alpha, \beta}  \Bigg\}\\ 
    &=
     \dfrac{1}{N_R N_L} 
     \Bigg\{  
     \sum _{\alpha, \beta} q_{\alpha} q_{\beta} 
     \left( 
     N_S V_{\alpha, \beta}
     -
    N_{\beta} \sum _{\gamma}  V_{\alpha, \gamma}
     -
    N_{\alpha} \sum _{\gamma} V_{\gamma, \beta} 
     \right) \;+\;
    \sum _{\alpha} q_{\alpha} N_{\alpha} \sum _{\gamma, \beta}  V_{\gamma, \beta}
     \Bigg\}\\
     &=
    \dfrac{1}{N_R N_L} 
    \Bigg\{  
    \sum _{\alpha, \beta} q_{\alpha} q_{\beta} 
    \left( 
    N_S V_{\alpha, \beta}
    -
    N_{\beta} \sum _{\gamma}  V_{\alpha, \gamma}
    -
    N_{\alpha} \sum _{\gamma} V_{\gamma, \beta} 
    \right) \; + \;
    \sum _{\alpha} \left( N_S ^2 \operatorname{Var}_{S}\right) N_{\alpha} q_{\alpha} 
    \Bigg\}. 
\end{align*}
\endgroup


Therefore, the expression of $\mathcal{R}(\mathbf{q})$ in terms of the precomputed values $V_{\alpha,\beta}$ under the binary encoding associated with a categorical explanatory variable is given by
\begin{align*}
    \mathcal{R}(\mathbf{q}) 
    = \frac{\mathbf{n(q)}}{\mathbf{d(q)}}, 
    \quad
    \mathbf{d(q)} = N_L N_R =
\sum\limits _{\alpha} N_S N_{\alpha} q_{\alpha}
-
\sum\limits _{\alpha, \beta} N_{\alpha} N_{\beta} q_{\alpha} q_{\beta},
\end{align*}


where $N_{\alpha}$ denotes the number of observations in class $C_{\alpha}$.
\end{proof}

\subsection{Proof of Lemma~\ref{lemma:lambda_update}}
\begin{proof}{}
Let $\mathbf{q}^*_n \in \arg\min_{\mathbf{q} \in \mathcal{Q}^*} 
F(\lambda_n, \mathbf{q})$ be the optimal solution at iteration $n$.

\textbf{Case~(i): $\mathbf{d}(\mathbf{q}^*_n) > 0$.}
By Lemma~\ref{lemma:optim_condition}, the update rule yields
\begin{equation*}
    \lambda_{n+1} 
    = \frac{\mathbf{n}(\mathbf{q}^*_n)}{\mathbf{d}(\mathbf{q}^*_n)} 
    = \mathcal{R}(\mathbf{q}^*_n),
\end{equation*}
where the last equality follows from  Lemma~\ref{lemma:fractional_decomposition}. 
Since $\mathcal{R}(\mathbf{q}) = N_L\operatorname{Var}_L + N_R\operatorname{Var}_R$ 
by definition, we obtain\\ 
$\lambda_{n+1} = N_L\operatorname{Var}_L + N_R\operatorname{Var}_R$.

\textbf{Case~(ii): $\mathbf{d}(\mathbf{q}^*_n) = 0$.}
Since $\mathbf{d}(\mathbf{q}^*_n) = N_L N_R = 0$, either $N_L = 0$ or 
$N_R = 0$, which corresponds to the trivial partition 
$\{\mathcal{N}_t, \emptyset\}$ where one child node is empty, 
i.e., $\mathbf{q}^*_n \in \{\mathbf{0}, \mathbf{1}\}$. 
In this case, the update rule 
$\lambda_{n+1} = \mathbf{n}(\mathbf{q}^*_n)/\mathbf{d}(\mathbf{q}^*_n)$ 
is undefined. Setting the size and variance of an empty node to zero gives
\begin{equation*}
    N_L\operatorname{Var}_L + N_R\operatorname{Var}_R 
    = N_S\operatorname{Var}_S + 0 
    = N_S\operatorname{Var}_S
    = \lambda_{n+1}.
\end{equation*}
This convention ensures that $\lambda_{n+1}$ remains well-defined and bounded at every iteration, consistently with 
Lemma~\ref{lemma:lambda_properties}.
\end{proof}

\subsection{Proof of Theorem~\ref{theorem:iterative_qubo}}

\begin{proof}{}
Using Lemma~(\ref{lemma:dinkelbach}), the parametric cost function associated with the fractional optimization problem is written as

\begin{equation}
\label{eq_hamil}
\begin{aligned}
   F(\lambda_n, \mathbf{q}) &=
   \sum\limits _{\alpha, \beta} 
   \left( 
    N_S V_{\alpha, \beta}
    -
    N_{\beta} \sum\limits _{\gamma}  V_{\alpha, \gamma}
    -
    N_{\alpha} \sum\limits _{\gamma} V_{\gamma, \beta} 
    \right) q_{\alpha} q_{\beta} \\
    & + 
    \sum\limits _{\alpha} \left( N_S ^2 \operatorname{Var}_{S}\right) N_{\alpha} q_{\alpha} 
    -
    \lambda _{n}
    \left(
    \sum _{\alpha} N_S N_{\alpha} q_{\alpha}
    -
    \sum _{\alpha, \beta} N_{\alpha} N_{\beta} q_{\alpha} q_{\beta}
    \right),
\end{aligned}
\end{equation}

where $\lambda _{n} =
    \dfrac{\mathbf{n(q_{n-1}^{*})}}{\mathbf{d(q_{n-1}^{*})}}$, $\lambda _0 = 0$, and
$\mathbf{q_{n-1}^{*}}$ denotes the optimal subsolution obtained at iteration $n-1$.

Rearranging the terms of (\ref{eq_hamil}) yields
\begin{align*}
  F(\lambda_n, \mathbf{q}) &=
   \sum\limits _{\alpha, \beta} 
   \left( 
    N_S V_{\alpha, \beta}
    -
    N_{\beta} \sum\limits _{\gamma}  V_{\alpha, \gamma}
    -
    N_{\alpha} \sum\limits _{\gamma} V_{\gamma, \beta} 
    +
    \lambda _{n} N_{\alpha} N_{\beta}
    \right) q_{\alpha} q_{\beta}\\
    &\qquad +
    \left( N_S ^2 \operatorname{Var}_{S} - N_S \lambda _{n} \right)
    \sum\limits _{\alpha} N_{\alpha} q_{\alpha}.
\end{align*}


Consequently,
\begin{equation}
    F(\lambda_n, \mathbf{q})
    =
    \sum _{\alpha, \beta} Q _{\alpha, \beta} q_{\alpha} q_{\beta}
    +
    \sum _{\alpha} L _{\alpha} q_{\alpha}
    = \mathbf{q}^T H\mathbf{q},
\end{equation}
where the symmetric QUBO matrix H has coefficients
\begin{align*}
\label{H_coef}
   Q _{\alpha, \beta} &=
   N_S V_{\alpha, \beta}
   -
   N_{\beta} \displaystyle\sum_{\gamma} V_{\alpha, \gamma}
   -
   N_{\alpha} \displaystyle\sum_{\gamma} V_{\gamma, \beta}
   +
   \lambda _{n} N_{\alpha} N_{\beta},\\
    L _{\alpha} &=
   \left( N_S ^2 \operatorname{Var}_{S} - N_S \lambda _{n} \right)N_{\alpha}.
\end{align*}
\end{proof}


\subsection{Proof of Lemma~\ref{F_properties}}
\begin{proof}{}
Let $N_R = \sum_\alpha (1-q_\alpha) N_\alpha = n_R(\mathbf{q}) $ and 
$N_L = \sum_\alpha q_\alpha N_\alpha = n_L(\mathbf{q})$ denote the number of observations in child nodes induced by $\mathbf{q}$.
Recall that $F(\lambda, \mathbf{q})  = \mathbf{n(q)} - \lambda\, \mathbf{d(q)}$, where
\begin{align*}
    \mathbf{n(q)} &= n_R(\mathbf{q}) \sum_{\alpha,\beta} q_\alpha q_\beta V_{\alpha,\beta} 
    + n_L(\mathbf{q}) \sum_{\alpha,\beta} (1-q_\alpha)(1-q_\beta) V_{\alpha,\beta}, \\
    \mathbf{d(q)} &= n_L(\mathbf{q})\, n_R(\mathbf{q}).
\end{align*}


We show that $F(\lambda, \mathbf{q})  = F(\lambda, \bar{\mathbf{q}})$ for all $\lambda \geq 0$ 
and $q \in \{0,1\}^M$, and consider two cases.

\textbf{Case 1: $\mathbf{d(q)} > 0$.} Since $\bar{\mathbf{q}} = \mathbf{1} - \mathbf{q}$, we have
\[
n_L(\bar{\mathbf{q}}) = \sum_\alpha (1-q_\alpha) N_\alpha = N_S - \sum_\alpha q_\alpha 
N_\alpha = n_R(\mathbf{q}),
\]
and similarly $n_R(\bar{\mathbf{q}}) = n_L(\mathbf{q})$, so that 
\[\mathbf{d}(\bar{\mathbf{q}}) = 
n_L(\bar{\mathbf{q}})\,n_R(\bar{\mathbf{q}}) = n_R(\mathbf{q})\,n_L(\mathbf{q}) = \mathbf{d(q)}.
\]
Substituting into the numerator yields
\begin{align*}
    \mathbf{n}(\bar{\mathbf{q}}) 
    &= n_R(\bar{\mathbf{q}}) \sum_{\alpha,\beta} (1-q_\alpha)(1-q_\beta) V_{\alpha,\beta}
    + n_L(\bar{\mathbf{q}}) \sum_{\alpha,\beta} q_\alpha q_\beta V_{\alpha,\beta} \\
    &= n_L(\mathbf{q}) \sum_{\alpha,\beta} (1-q_\alpha)(1-q_\beta) V_{\alpha,\beta}
    + n_R(\mathbf{q}) \sum_{\alpha,\beta} q_\alpha q_\beta V_{\alpha,\beta}\\ 
    &= \mathbf{n(q)}.
\end{align*}


Hence $F(\lambda, \bar{\mathbf{q}}) = \mathbf{n(q)} - \lambda\,\mathbf{d(q)} = F(\lambda, \mathbf{q}) $.

\textbf{Case 2: $\mathbf{d(q)} = 0$.} This holds if and only if $n_L(\mathbf{q}) = 0$ or $n_R(\mathbf{q}) = 0$, which corresponds to $\mathbf{q} \in \{\mathbf{0}, \mathbf{1}\}$. 
In both cases, all cross terms in $\mathbf{n(q)}$ vanish, so $\mathbf{n(q)} = 
\mathbf{d(q)} = 0$, and by the same argument $\mathbf{n}(\bar{\mathbf{q}}) = 
\mathbf{d}(\bar{\mathbf{q}}) = 0$. Therefore,
$F(\lambda, \mathbf{q})  = F(\lambda, \bar{\mathbf{q}}) = 0$,
which in particular gives $F(\lambda, \mathbf{0}) = F(\lambda, \mathbf{1}) = 0$ 
for any $\lambda \geq 0$, establishing both properties of the theorem.
\end{proof}

\subsection{Proof of Lemma~\ref{lemma:convergence_equivalence}}
\begin{proof}{}
Recall that $F(\lambda, \mathbf{q}) = \mathbf{n}(\mathbf{q}) 
- \lambda\,\mathbf{d}(\mathbf{q})$, and that $\lambda_n = 
\mathbf{n}(\mathbf{q}^*_{n-1})/\mathbf{d}(\mathbf{q}^*_{n-1})$ 
is the updated parameter at iteration $n$.

\textit{($\Leftarrow$)}
Suppose first that $\mathbf{q}^*_n \in \{\mathbf{0}, \mathbf{1}\}$.
Then $F(\lambda_n, \mathbf{q}^*_n) = 0$ for any $\lambda_n \geq 0$ 
by Lemma~\ref{F_properties}.

Suppose now that $(\lambda_n, \mathbf{q}^*_n) = (\lambda^*, \mathbf{q}^*)$.
Then $F(\lambda^*, \mathbf{q}^*) = \mathbf{n}(\mathbf{q}^*) - 
\lambda^*\,\mathbf{d}(\mathbf{q}^*) = 0$
by definition of $\lambda^* = \mathbf{n}(\mathbf{q}^*)/\mathbf{d}(\mathbf{q}^*)$
and Lemma~\ref{lemma:optim_condition}.

\textit{($\Rightarrow$)}
Suppose $F(\lambda_n, \mathbf{q}^*_n) = 0$, i.e., 
$\mathbf{n}(\mathbf{q}^*_n) = \lambda_n\,\mathbf{d}(\mathbf{q}^*_n)$.

\textbf{Case 1: $\mathbf{d}(\mathbf{q}^*_n) = 0$.}
Then $\mathbf{n}(\mathbf{q}^*_n) = 0$, which implies 
$\mathbf{q}^*_n \in \{\mathbf{0}, \mathbf{1}\}$ 
by the proof of Lemma~\ref{F_properties}.

\textbf{Case 2: $\mathbf{d}(\mathbf{q}^*_n) > 0$.}
Then $\mathbf{q}^*_n \in \mathcal{Q}^*$ is a non-trivial solution, and by definition of $\lambda^* = \min_{\mathbf{q} \in \mathcal{Q}^*} 
\mathcal{R}(\mathbf{q})$, we have
$
\lambda_n 
= \mathbf{n}(\mathbf{q}^*_n)/\mathbf{d}(\mathbf{q}^*_n) 
= \mathcal{R}(\mathbf{q}^*_n) 
\geq \lambda^*$.

It remains to show that $\lambda_n \leq \lambda^*$, and hence 
$\mathbf{q}^*_n = \mathbf{q}^*$.
Since $\mathbf{q}^*_n$ minimizes $F(\lambda_n, \cdot)$ over 
$\{0,1\}^M$ and $F(\lambda_n, \mathbf{q}^*_n) = 0$, we have 
$F(\lambda_n, \mathbf{q}') \geq 0$ for all $\mathbf{q}' \in \{0,1\}^M$.
In particular, evaluating at the global optimum $\mathbf{q}^*$:
\[
0 \leq F(\lambda_n, \mathbf{q}^*) 
= \mathbf{n}(\mathbf{q}^*) - \lambda_n\,\mathbf{d}(\mathbf{q}^*)
= \mathbf{d}(\mathbf{q}^*)\bigl(\mathcal{R}(\mathbf{q}^*) - \lambda_n\bigr)
= \mathbf{d}(\mathbf{q}^*)(\lambda^* - \lambda_n).
\]
Since $\mathbf{d}(\mathbf{q}^*) > 0$, this yields $\lambda^* \geq \lambda_n$.
Combined with $\lambda_n \geq \lambda^*$, we conclude $\lambda_n = \lambda^*$,
and therefore $\mathcal{R}(\mathbf{q}^*_n) = \mathcal{R}(\mathbf{q}^*) = \lambda^*$,
which means $\mathbf{q}^*_n = \mathbf{q}^*$.
\end{proof}

\subsection{Proof of Theorem~\ref{theorem:convergence}}
\begin{proof}
\textit{($\Rightarrow$)} Suppose there exists $k \geq 1$ such that 
$F(\lambda_k, \mathbf{q}^*_k) = 0$ with $\mathbf{q}^*_k \in \mathcal{Q}^*$. By 
Lemma~\ref{lemma:convergence_equivalence}, this implies $(\lambda_k, \mathbf{q^*_k}) = 
(\lambda^*, \mathbf{q^*})$. The update rule of Lemma~\ref{lemma:lambda_update} then gives
\[
\lambda_{k+1} = \mathbf{n(q^*_k)}/\mathbf{d(q^*_k)} = \mathcal{R}(\mathbf{q^*_k}) 
= \lambda^* = \lambda_k,
\]
so the sequence has stabilized and the algorithm returns the optimal solution.
\end{proof}

\subsection{Proof of Lemma~\ref{lemma:lambda_properties}}

\begin{proof}
\emph{Property 1: $\lambda^* \leq N_S\operatorname{Var}_S$.}

By Lemma~\ref{lemma:optim_condition}, the optimal parameter satisfies 
$\lambda^* = \mathcal{R}\mathbf{(q^*)} = N_L\operatorname{Var}_L + 
N_R\operatorname{Var}_R.$\\
Since $\mathbf{q^*}$ is a non-trivial solution, both child 
nodes are non-empty, and the weighted variance decomposition gives
\[
N_L\operatorname{Var}_L + N_R\operatorname{Var}_R \leq 
N_S\operatorname{Var}_S,
\]
as splitting a node cannot increase the total weighted variance. Therefore 
$\lambda^* \leq N_S\operatorname{Var}_S$.

\textit{Property 2: $F(\lambda, \mathbf{q})  > 0$ for any $\lambda < \lambda^*$ and 
any non-trivial $\mathbf{q} \in \mathcal{Q^*}$.}

For any non-trivial $\mathbf{q}$, we have $\mathbf{d(q)} > 0$, so
\[
F(\lambda, \mathbf{q})  = \mathbf{d(q)}\left(\frac{\mathbf{n(q)}}{\mathbf{d(q)}} - 
\lambda\right) = \mathbf{d(q)}\left(\mathcal{R}(\mathbf{q}) - \lambda\right).
\]
Since $\lambda < \lambda^* = \min\limits_{\mathbf{q} \in \mathcal{Q^*}} 
\mathcal{R}(\mathbf{q})$, we have $\mathcal{R}(\mathbf{q}) \geq \lambda^* > \lambda$ for all non-trivial $\mathbf{q}$, hence $F(\lambda, \mathbf{q})  > 0$. Furthermore, by 
Lemma~\ref{F_properties}, $F(\lambda, \mathbf{0}) = F(\lambda, \mathbf{1}) 
= 0 < F(\lambda, \mathbf{q}) $ for any non-trivial $\mathbf{q}$, so the minimum is attained by 
a trivial solution.

\textit{Property 3: For any $\lambda \geq N_S\operatorname{Var}_S$, there exists a non-trivial $\mathbf{q}$ such that $F(\lambda, \mathbf{q})  \leq 0$, and the minimum is attained by a non-trivial solution.}

Let $\mathbf{q^*}$ denote the optimal non-trivial solution. By Property 1, $\lambda^* \leq N_S\operatorname{Var}_S \leq \lambda$, so
\[
F(\lambda, \mathbf{q^*}) = \mathbf{d}(\mathbf{q^*})\left(\mathcal{R}(\mathbf{q^*}) - \lambda\right) 
= \mathbf{d}(\mathbf{q^*})(\lambda^* - \lambda) \leq 0.
\]
If $\lambda > \lambda^*$, then $F(\lambda, \mathbf{q^*}) < 0$, and since $F(\lambda, \mathbf{0}) = F(\lambda, \mathbf{1}) = 0 > F(\lambda, \mathbf{q^*})$, 
the minimum is attained by a non-trivial solution. If $\lambda = \lambda^*$, 
then $F(\lambda^*, \mathbf{q^*}) = 0$ and the minimum is also attained at $\mathbf{q^*}$ by Lemma~\ref{lemma:optim_condition}.

\textit{Property 4: Step 1 --- $(\lambda_n)$ is decreasing and bounded below by $\lambda^*$.}

We proceed by induction on $n$, showing that $\lambda^* \leq \lambda_{n+1} \leq \lambda_n$ at each step.

\textit{Base case ($n = 1$):} By assumption, $\lambda_0 \geq 
N_S\operatorname{Var}_S$. By Property 3, the minimizer $\mathbf{q_0^*}$ of 
$F(\lambda_0, \cdot)$ is non-trivial, so the update rule of 
Lemma~\ref{lemma:lambda_update} gives
$\lambda_1 = \mathcal{R}(\mathbf{q_0^*}) \leq N_S\operatorname{Var}_S \leq \lambda_0$,  
where the first inequality follows from Property 1. Furthermore, since $\mathbf{q_0^*}$ is a non-trivial solution, $\lambda_1 = \mathcal{R}(\mathbf{q_0^*}) \geq \lambda^*$ by definition of $\lambda^*$ as the minimum of $\mathcal{R}(\mathbf{q})$ over all non-trivial solutions. Hence $\lambda^* \leq \lambda_1 \leq \lambda_0$.

If $\lambda_1 = \lambda^*$, the algorithm stops and the result holds. Otherwise, we proceed to the inductive step.

\textit{Inductive step:} Assume that $\lambda^* \leq \lambda_k \leq \lambda_{k-1}$ for all $k \leq n$, so that $\lambda^* \leq \lambda_n \leq \cdots \leq \lambda_0$. We show that $\lambda^* \leq \lambda_{n+1} \leq \lambda_n$.

Let $\mathbf{q_n^*}$ be any minimizer of $F(\lambda_n, \cdot)$ over $\{0,1\}^M$. Since $F(\lambda_n, \mathbf{0}) = 0$ by Lemma~\ref{F_properties}, we have
$F(\lambda_n, \mathbf{q_n^*}) \leq F(\lambda_n, \mathbf{0}) = 0$.

\textbf{Case 1: $\mathbf{q_n^*}$ is non-trivial.} Then $\mathbf{d}(\mathbf{q_n^*}) > 0$ and
$\mathbf{d}(\mathbf{q_n^*})\bigl(\mathcal{R}(\mathbf{q_n^*}) - \lambda_n\bigr) = 
F(\lambda_n, \mathbf{q_n^*}) \leq 0$.
So $\lambda_{n+1} = \mathcal{R}(\mathbf{q_n^*}) \leq \lambda_n$.\\ Furthermore, since $q_n^*$ is non-trivial, $\lambda_{n+1} = \mathcal{R}(\mathbf{q_n^*}) \geq \lambda^*$ by definition of $\lambda^*$.

If $F(\lambda_n, \mathbf{q_n^*})=0$, then $\lambda_{n+1}=\lambda_n$ and by Theorem~\ref{theorem:convergence}, algorithm converges with $\lambda_{n+1}=\lambda_n = \lambda ^*$.

\textbf{Case 2: $\mathbf{q_n^*}$ is trivial, i.e., $\mathbf{q_n^*} \in \{\mathbf{0}, \mathbf{1}\}$}. Then $F(\lambda_n, \mathbf{q_n^*}) = 0$ by Lemma~\ref{F_properties}, and since $\mathbf{q_n^*}$ minimizes $F(\lambda_n, \cdot)$, every $\mathbf{q}$ satisfies $F(\lambda_n, \mathbf{q}) \geq 0$. In particular, for the optimal non-trivial solution $\mathbf{q_n^*}$, we have
$0 \leq F(\lambda_n, \mathbf{q_n^*}) = \mathbf{d(q^*)}(\lambda^* - \lambda_n) \leq 0$, 
where the last inequality uses $\lambda^* \leq \lambda_n$ from the inductive hypothesis. Hence $\lambda_n = \lambda^*$, and the update rule gives 
$\lambda_{n+1} = \mathcal{R}(\mathbf{q^*}) = \lambda^* = \lambda_n$, so the sequence 
has stabilized at $\lambda^*$.

\textit{Step 2 --- $(\lambda_n)$ converges to $\lambda^*$.}

Since $(\lambda_n)$ is monotonically decreasing and bounded below by $\lambda^* \geq 0$ (Property 1), it converges to some limit $\bar{\lambda} \geq \lambda^*$.

By the stopping criterion of Theorem~\ref{theorem:convergence}, the algorithm terminates at the first iteration $k$ such that $\lambda_{k+1} = \lambda_k$, 
i.e., when $\lambda_{k+1} = \mathcal{R}(\mathbf{q_k^*}) = \lambda_k$.
By Lemma~\ref{lemma:optim_condition}, this is equivalent to $F(\lambda_k, \mathbf{q_k^*}) = 0$ with $\mathbf{q_k^*} \in \mathcal{Q^*}$, which by Lemma~\ref{lemma:convergence_equivalence} implies $(\lambda_k, \mathbf{q_k^*}) = (\lambda^*, \mathbf{q_k^*})$. Hence $\bar{\lambda} = \lambda^*$.
\end{proof}

\section{Building and Pruning Tree Algorithms} 
\label{Appendix:algo build tree}
We introduce the following notation: $R(t)$ denotes the empirical risk at an internal node $\mathcal{N}_t$, weighted by the proportion of observations reaching this node; $\mathcal{A}^t$ denotes the subtree of $\mathcal{A}$ rooted at $\mathcal{N}_t$; and $R(\mathcal{A}^t)$ denotes the total empirical risk of the subtree $\mathcal{A}^t$. The tree construction algorithm and explicit computation of the pruning sequence is summarized in Algorithms~ \ref{alg:cart_qubo}, \ref{alg:cost_complexity} (see Appendix~\ref{Appendix:algo build tree}).

\begin{algorithm}[H]
\small
\caption{CART--QUBO Tree Construction}
\label{alg:cart_qubo}
\begin{algorithmic}[1]
\Require Dataset $D$, control parameters $(cp, maxdepth, minsplit)$
\Ensure Binary regression or classification tree $T$

\Function{BuildTree}{$D$}
    \If{stopping criteria are met (node size, depth, or complexity)}
        \State \Return leaf node with prediction (mean or majority class)
    \EndIf

    \ForAll{variables $X_j$ in $D$}
        \If{$X_j$ is continuous or ordinal}
            \State Compute optimal split $\delta _j$ by threshold scanning (standard CART)
        \ElsIf{$X_j$ is categorical}
            \State Compute optimal split $\delta _j$ by solving the associated QUBO problem
        \EndIf
    \EndFor

    \State Select the optimal split $\delta^* = \arg\min \limits _{\delta _j}\{\text{Impurity}\}$
    \State Partition $D$ into $(D_{\text{left}}, D_{\text{right}})$ according to $\delta^*$

    \State Node.left $\gets$ \Call{BuildTree}{$D_{\text{left}}$}
    \State Node.right $\gets$ \Call{BuildTree}{$D_{\text{right}}$}

    \State \Return Node
\EndFunction
\end{algorithmic}
\end{algorithm}


\begin{algorithm}[H]
\small
\caption{Cost--complexity pruning}
\label{alg:cost_complexity}
\begin{algorithmic}[1]
\Require Maximal tree $\mathcal{A}_0$
\Ensure Sequences $(\alpha_m)_m$ and $(\mathcal{A}_{\alpha_m})_m$

\State Initialize $\alpha_0 \gets 0$, $\mathcal{A}_{\alpha_0} \gets \mathcal{A}_0$, $m \gets 0$
\While{$\mathcal{A}_{\alpha_m}$ is not reduced to the root node}
    \ForAll{internal nodes $t$ of $\mathcal{A}_{\alpha_m}$}
        \State $g(t) \gets \dfrac{R(t) - R(\mathcal{A}_{\alpha_m}^t)}{|\mathcal{A}_{\alpha_m}^t| - 1}$
    \EndFor
    \State $t_m \gets \arg\min_t g(t)$
    \State $\alpha_{m+1} \gets g(t_m)$
    \State $\mathcal{A}_{\alpha_{m+1}} \gets \mathcal{A}_{\alpha_m} \setminus \mathcal{A}_{\alpha_m}^{t_m}$
    \State $m \gets m + 1$
\EndWhile
\State \Return $(\alpha_m)_m$, $(\mathcal{A}_{\alpha_m})_m$
\end{algorithmic}
\end{algorithm}


\section{Datasets}
\label{Appendix: datasets}
\subsection{Synthetic Insurance Data Generation Procedure}
\begin{algorithm}[H]
\small
\caption{Structured Synthetic Dataset Generation: \texttt{df}}
\label{alg:df_generation}
\begin{algorithmic}[1]

\Require Number of observations $n = 20{,}000$; random seed $123$
\Ensure Dataset $\mathcal{D}$
\State Define brand base prices $Base_{Brand}$
\State Define color multiplicative factors $c_{Color}$

\For{$i = 1$ to $n$}

    \State Draw $Brand_i$ uniformly from the set of brands
    \State $Base_i \leftarrow Base_{Brand_i}$
    
    \State Generate mileage:
    \vspace{-0.5em}
    \[
    Mileage_i \sim \text{Gamma}(shape=2.0,\ scale=30000)
    \]
    \State Truncate $Mileage_i \leftarrow \min(Mileage_i, 250{,}000)$
    
    \State Draw $Color_i$ uniformly from the set of colors
    \State $c_i \leftarrow c_{Color_i}$
    
    \State Compute claim probability (logistic form):
    \[
    p_i = \left[1 + \exp\left(-\frac{Mileage_i - 80000}{20000}\right)\right]^{-1}
    \]
    \State Draw $HasClaim_i \sim \text{Bernoulli}(p_i)$
    
    \If{$HasClaim_i = 1$}
    
        \State Compute structured severity:
        \begin{align*}
          Severity_i &=
        0.15\,Base_i
        + 0.002\,Mileage_i
        + 5000\,\mathbb{I}_{Brand_i \in \{\text{BMW, Audi, Mercedes}\}}\\
        &\qquad + 3000\,\mathbb{I}_{Color_i = \text{Red}}
        \end{align*}
        
        \State Generate noise:
        \vspace{-0.5em}
        \[
        \varepsilon_i \sim \mathcal{N}(0, 2000)
        \]
        
        \State Compute claim amount:
        \[
        ClaimAmount_i =
        \max\big(100,\; Severity_i \cdot c_i + \varepsilon_i \big)
        \]
        
    \Else
    
        \State $ClaimAmount_i \leftarrow 0$
        
    \EndIf
    
    \State Add $(Brand_i, Color_i, Mileage_i, HasClaim_i, ClaimAmount_i)$ to $\mathcal{D}$

\EndFor

\State \Return $\mathcal{D}$

\end{algorithmic}
\end{algorithm}

\begin{algorithm}[H]
\small
\caption{Synthetic Dataset Generation: \texttt{datagen}}
\label{alg:datagen_generation}
\begin{algorithmic}[1]

\Require Number of observations $n \in \{10{,}000; 50{,}000\}$, random seed $123$
\Ensure Dataset $\mathcal{D}$

\State Define brand-specific parameters $(Base_{b}, \sigma_{b})$
\State Define color multiplicative factors $c_{color}$

\For{$i = 1$ to $n$}

    \State Draw $Brand_i$ uniformly from the set of brands
    \State $Base_i \leftarrow Base_{Brand_i}$, $\sigma_i \leftarrow \sigma_{Brand_i}$
    
    \State Generate mileage:
    \vspace{-0.5em}
    \[
    Mileage_i \sim \text{LogNormal}(10,\,0.5)
    \]
    \State Truncate $Mileage_i \leftarrow \min(Mileage_i, 300{,}000)$
    
    \State Draw $Color_i$ uniformly from the set of colors
    \State $c_i \leftarrow c_{Color_i}$
    
    \State Compute claim probability:
    \[
    p_i = \min\big(\max(0.15 + 0.000002 \times Mileage_i,\, 0.01),\, 0.9\big)
    \]
    \State Draw $HasClaim_i \sim \text{Bernoulli}(p_i)$
    
    \If{$HasClaim_i = 1$}
    
        \State Generate base severity:
        \[
        Basic_i \sim 
        \text{LogNormal}\big(\log(0.1 \times Base_i),\, \sigma_i\big)
        \]
        
        \State Compute mileage component:
        \vspace{-0.5em}
        \[
        KM_i = 0.001 \times Mileage_i \times U(0.5,1.5)
        \]
        
        \State With probability $0.02$, generate heavy tail:
        \[
        Tail_i \sim \text{LogNormal}(\log(Base_i),\,1.0)
        \]
        \State Otherwise $Tail_i \leftarrow 0$
        
        \State Compute claim amount:
        \[
        ClaimAmount_i =
        \max\big(50,\; (Basic_i + KM_i + Tail_i)\times c_i \big)
        \]
        
    \Else
    
        \State $ClaimAmount_i \leftarrow 0$
        
    \EndIf
    
    \State Add $(Brand_i, Color_i, Mileage_i, HasClaim_i, ClaimAmount_i)$ to $\mathcal{D}$

\EndFor
\State \Return $\mathcal{D}$
\end{algorithmic}
\end{algorithm}

\newpage
\subsection{Real-World Datasets variables}


\begin{table}[ht]
\centering
\small
\caption{Variables used in the experiments for each dataset.}
\label{tab:datasets_variables}
\begin{tabular}{l p{10cm}}
\toprule
\textbf{Dataset} & \textbf{Variables} \\
\midrule
\textbf{Ames}
  & \textit{SalePrice$^*$, MSZoning, HouseStyle, BldgType} \\
\midrule
\textbf{freMPL}
  & \textit{Exposure, VehBody, VehClass, DrivAge, VehEnergy, RiskVar, ClaimAmount, Y$^*$} \\
\midrule
\textbf{ausprivauto}
  & \textit{Exposure, VehValue, VehAge, VehBody, DrivAge, ClaimAmount, Y$^*$} \\
\bottomrule
\multicolumn{2}{l}{%
  \footnotesize $^*$Response variable,  $Y = \textit{ClaimAmount}/\textit{Exposure}$%
}\\
\end{tabular}
\end{table}


\section{Optimization results (Convergence Behavior)}
\label{Appendix: Optim results}

\begin{table}[H]
\centering
\scriptsize
\caption{Convergence of the QUBO algorithm for the \texttt{df\_tricky\_means} dataset}
\label{tab:tricky_color_iterations}
\begin{tabular}{c c c c c c}
\toprule
Variable & Iteration & $\lambda_{\text{initial}}$ & Binary partition vector & Score & $\lambda_{\text{final}}$ \\ 
\midrule

\multirow{3}{*}{Color}
& 1
& $5.8136\times10^{10}$
& $(0,0,0,1,1,1)$
& $1.1516 \times10^{7}$
& $5.7594\times10^{10}$ \\

& 2
& $5.7594\times10^{10}$
& $(1,1,1,0,1,0)$
& $1.1514 \times10^{7}$
& $5.7583\times10^{10}$ \\

& 3
& $5.7583\times10^{10}$
& $\mathbf{(0,0,0,1,0,1)}$
& $1.1514 \times10^{7}$
& $5.7583\times10^{10}$ \\
\bottomrule
\end{tabular}


\footnotesize
\scriptsize
\textbf{Binary vector encoding:}

\textit{Color:} $(q_1,\dots,q_6) = (\text{Black, Blue, Gray, Green, Red, White})$.
\end{table}


\begin{table}[H]
\centering
\scriptsize
\caption{Convergence of the QUBO algorithm on the \texttt{Ames} dataset.}
\label{tab:ames_convergence}
\begin{tabular}{l l l l l l}
\toprule
Variable & Iteration & $\lambda_{\text{initial}}$ & Binary vector & Score & $\lambda_{\text{final}}$ \\ 
\midrule

\multirow{3}{*}{HouseStyle}
& 1 & 0 & $(0,0,0,0,0,0,0,0)$ & 0.631 & $9.2079 \times 10^{2}$ \\
& 2 & $9.2079 \times 10^{2}$ & $(0,0,0,1,0,1,0,0)$ & 0.592 & $8.6438 \times 10^{2}$ \\
& 3 & $8.6438 \times 10^{2}$ & $\mathbf{(1,1,1,0,1,0,1,1)}$ & 0.592 & $8.6438\times 10^{2}$ \\ 
\midrule

\multirow{3}{*}{BldgType}
& 1 & 0 & $(0,0,0,0,0)$ & 0.631 & $9.2079 \times 10^{2}$ \\
& 2 & $9.2079 \times 10^{2}$ & $(0,1,1,1,0)$ & 0.609 & $8.8925 \times 10^{2}$ \\
& 3 & 889.246 & $\mathbf{(1,0,0,0,1)}$ & 0.609 & $8.8925 \times 10^{2}$ \\ 
\bottomrule
\end{tabular}

\footnotesize
\scriptsize
\textbf{Binary vector encoding:}

\textit{HouseStyle:} $(q_1,\dots,q_8)=(\text{1.5Fin, 1.5Unf, 1Story, 2.5Fin, 2.5Unf, 2Story, SFoyer, SLvl})$.

\textit{BldgType:} $(q_1,\dots,q_5)=(\text{1Fam, 2fmCon, Duplex, Twnhs, TwnhsE})$.
\end{table}


\begin{table}[H]
\centering
\scriptsize
\renewcommand{\arraystretch}{1.2}
\caption{Convergence of the QUBO algorithm (Exact and Gurobi Solveur) on the \textit{ausprivauto} dataset.}
\label{tab:ausprivauto_conv}
\begin{tabular}{l l l l l l}
\toprule
Variable & Iteration & $\lambda_{\text{initial}}$ & Binary vector & Score & $\lambda_{\text{final}}$ \\ 
\midrule

\multirow{3}{*}{\makecell[l]{VehBody \\ Exact}}
&  1 & 0  & $(0,0,0,0,0,0,0,0,0,0,0,0,0)$ & $1.8438 \times 10 ^{10}$ & 
$6.3943 \times 10 ^{13}$ \\ 

& 2 & $6.3943 \times 10 ^{13}$ & $(0,0,0,0,0,0,0,0,0,1,0,1,0)$ & $1.8406 \times 10 ^{10}$ & $ 6.3833 \times 10 ^{13}$ \\ 

& 3 & $ 6.3833 \times 10 ^{13}$ & $\mathbf{(1,1,1,1,1,1,1,1,1,0,1,0,1)}$ & $1.8406 \times 10 ^{10}$ & $ 6.3833 \times 10 ^{13}$ \\ 
\midrule

\multirow{3}{*}{\makecell[l]{VehBody \\ Gurobi}}
&  1 & 0  & $(0,0,0,0,0,0,0,0,1,0,0,0,0)$ & $1.8438 \times 10 ^{10}$& $6.3942\times 10 ^{13}$ \\

& 2 & $6.3942\times 10 ^{13}$ & $(1,1,1,1,1,1,1,1,1,0,1,0,1)$ & $1.8406 \times 10 ^{10}$ & $6.3833\times 10 ^{13}$ \\

& 3 & $6.3833\times 10 ^{13}$ & $\mathbf{(0,0,0,0,0,0,0,0,0,1,0,1,0)}$ & $1.8406 \times 10 ^{10}$ & $6.3833\times 10 ^{13}$\\ 
\midrule

\multirow{3}{*}{\makecell[l]{VehClass \\ Exact}}
& 1 & 0 & $(0,0,0,0,0,0)$ & $1.8438\times 10 ^{10}$ & $6.3943\times 10 ^{13}$ \\

& 2 & $6.3943\times 10 ^{13}$ & $(0,0,0,0,0,1)$ & $1.8411 \times 10 ^{10}$ & $6.3848\times 10 ^{13}$\\

& 3 & $6.3848\times 10 ^{13}$ & $\mathbf{(0,0,0,0,0,1)}$ & $1.8411 \times 10 ^{10}$ & $6.3848\times 10 ^{13}$ \\
 \midrule
 
 \multirow{3}{*}{\makecell[l]{VehClass \\ Gurobi}}
& 1 & 0 & $(1,1,0,1,1,1)$ & $1.8438 \times 10 ^{10}$ & $6.39423\times 10 ^{13}$ \\

& 2 & $6.3942\times 10 ^{13}$ & $(1,1,1,1,1,0)$ & $1.8411 \times 10 ^{10}$ & $6.3848\times 10 ^{13}$\\

& 3 & $6.3848\times 10 ^{13}$ & $\mathbf{(1,1,1,1,1,0)}$ & $1.8411 \times 10 ^{10}$ & $6.3848\times 10 ^{13}$ \\
 \bottomrule
\end{tabular}

\footnotesize
\scriptsize
\textbf{Binary vector encoding:}  

\textit{VehBody :} $(q_1, q_2, \cdots, q_{13}) = $ (Bus, Convertible, Coupe, Hardtop, Hatchback, Minibus, Motorized caravan, Panel van, Roadster, Sedan, Station wagon, Truck, Utility). 

\textit{VehClass :}  $(q_1, q_2, \cdots, q_{6}) =$ (old people, older work. people, oldest people, working people, young people, youngest people).
\end{table}

\bibliographystyle{plainnat}
\bibliography{ref}

\end{document}